\newcommand{\linebreakand}{%
  \end{@IEEEauthorhalign}
  \hfill\mbox{}\par
  \mbox{}\hfill\begin{@IEEEauthorhalign}
}
\begin{document}

\title{ShorTor: Improving Tor Network Latency via Multi-hop Overlay Routing}

\author{
\IEEEauthorblockN{Kyle Hogan}
\IEEEauthorblockA{\textit{MIT CSAIL}}
\and
\IEEEauthorblockN{Sacha Servan-Schreiber}
\IEEEauthorblockA{\textit{MIT CSAIL}}
\and
\IEEEauthorblockN{Zachary Newman}
\IEEEauthorblockA{\textit{MIT CSAIL}}
\and
\IEEEauthorblockN{Ben Weintraub}
\IEEEauthorblockA{\textit{Northeastern University}}
\linebreakand
\IEEEauthorblockN{Cristina Nita-Rotaru}
\IEEEauthorblockA{\textit{Northeastern University}}
\and
\IEEEauthorblockN{Srinivas Devadas}
\IEEEauthorblockA{\textit{MIT CSAIL}}
}

\maketitle

\begin{abstract}
We present \name, a protocol for reducing latency on the Tor network.
\name uses \emph{multi-hop overlay routing}, a technique typically employed by content delivery networks, to influence the route Tor traffic takes across the internet.
In this way, \name avoids slow paths and improves the experience for end users by reducing the latency of their connections while imposing minimal bandwidth overhead.

\name functions as an overlay on top of onion routing---Tor's existing routing protocol---and is run by Tor relays, making it independent of the path selection performed by Tor clients.
As such, \name reduces latency while preserving Tor's existing security properties.
Specifically, the routes taken in \name are in no way correlated to either the Tor user or their destination, including the geographic location of either party.
We analyze the security of \name using the AnoA framework, showing that \name maintains all of Tor's anonymity guarantees. 
We augment our theoretical claims with an empirical analysis.

To evaluate \name's performance, we collect a real-world dataset of over \num{400000} latency measurements between the \num{1000} most popular Tor relays, which collectively see the vast majority of Tor traffic.
With this data, we identify pairs of relays that could benefit from \name: that is, two relays where introducing an additional intermediate network hop results in \emph{lower} latency than the direct route between them.
We use our measurement dataset to simulate the impact on end users by applying \name to two million Tor circuits chosen according to Tor's specification.

\name reduces the latency for the 99\textsuperscript{th} percentile of relay pairs in Tor by \SI{148}{\milli\second}.
Similarly, \name reduces the latency of Tor circuits by \SI{122}{\milli\second} at the 99\textsuperscript{th} percentile.
In practice, this translates to \name truncating tail latencies for Tor which has a direct impact on page load times and, consequently, user experience on the Tor browser. 
\end{abstract}


\section{Introduction}
\label{sec:intro}

Tor is the foremost deployed system for anonymous communication.
Millions of people around the world use Tor every day to escape censorship and avoid surveillance of their browsing habits~\cite{tor-paper,tor-metrics}.
This broad user base is a critical component of Tor's privacy guarantees.
Tor users are anonymous only amongst each other---not within the general internet population. 
That is, an internet censor may be able to know that \emph{some} Tor user visited a blocked site, but not \emph{which} Tor user.
Because of this, the degree of anonymity Tor provides in practice grows with the total number of concurrent users on the network~\cite{dingledine2006anonymity}. 

This relationship between the privacy of individual users and the overall popularity of Tor makes user experience a major concern for Tor.
A poor experience relative to non-private browsing results in lower adoption of Tor and, ultimately, limits the degree of anonymity Tor is capable of providing. 
A major factor contributing to positive user experience is latency. 
Internet users are very sensitive to latency, and increased page load times discourage user interaction~\cite{attig2017system,arapakis2014impact,crescenzi2016impacts}.
Unfortunately, anonymous communication incurs higher latency than typical internet connections~\cite{van2015vuvuzela,piotrowska2017loopix,angel2016unobservable,tyagi2017stadium,kwon2017atom,lazar2018karaoke,kwon2020xrd,eskandarian2019express}.

In Tor, much of this overhead is due to the underlying structure of its connections~\cite{tor-paper,tor-spec}. 
Tor is a network composed of ${\sim}\num{7000}$ volunteer-run servers, or \emph{relays}, used to route client traffic.
Rather than connecting directly to their destination, Tor clients tunnel their traffic through a series of Tor relays in a process known as \emph{onion routing}.
This drastically increases the path length for Tor traffic, and, in turn, latency.

A substantial body of prior work aims to reduce latency in Tor by changing the relay selection process~\cite{alsabah2013path,imani2019modified,rochet2020claps,yang2015mtor,akhoondi2012lastor,wang2012congestion,sherr2009scalable,barton2018towards,annessi2016navigator}.
By default, Tor clients select relays for their circuit at random, weighted by relay bandwidth, and do not consider path length or circuit latency in the process. 
In contrast, proposals that aim to reduce latency often prioritize selecting circuits that have low latency between relays~\cite{imani2019modified,rochet2020claps,akhoondi2012lastor,sherr2009scalable,barton2018towards,annessi2016navigator}.
Unfortunately, preferentially choosing circuits in this way \emph{also} selects relays that are correlated with the identity of the user or their destination~\cite{mator,anoa-mators-thesis,anoa}.
Many attacks show how this can be exploited to deanonymize Tor users, allowing a passive observer to identify information about user locations~\cite{rochet2020claps,mitseva2020security,mator,anoa-mators-thesis,anoa,geddes2013how,wacek2013empirical,Wan2019guard}.

In this paper, we propose \name, an entirely different approach to reducing the latency of Tor traffic. 
Rather than alter the circuit selection process, \name exploits a technique used by content delivery networks (CDNs) known as \emph{multi-hop overlay routing}~\cite{su2009drafting,bornstein2013optimal}.
Multi-hop overlay routing, like Tor, functions by introducing intermediate hops into its connections, but does so for the explicit purpose of \emph{reducing latency}.
In the wild, CDNs use multi-hop overlay routing to influence the path internet traffic takes.
They do this by inserting their own servers as intermediate points in client connections, avoiding slow default routes by forcing traffic to travel through their server, rather than directly to its destination.
The success of this technique is due to the existence of sub-optimal default routes across the internet~\cite{duffield2007measurement} and the distributed nature of CDN-controlled nodes. 
The broad presence of CDN controlled servers gives them \emph{many} possible routes to choose from, and consequently, increases their odds of finding a faster route to send traffic through. 
In practice, this allows CDNs to avoid outages, congestion, or other delays along the default path.

\smallskip
\noindent
With \name, we ask: 
\begin{displayquote}
\emph{Can multi-hop overlay routing reduce latency in Tor without compromising anonymity?}
\end{displayquote}

While multi-hop overlay routing is widely successful for CDNs (which share some similarities to the Tor network), Tor is much smaller with ${\sim}\num{7000}$ nodes~\cite{tor-metrics-network-size} compared to the ${\sim}\num{300000}$ operated by a CDN~\cite{akamai-cdn}.
In addition to the difference in scale, Tor relays are volunteer run and their placement is not optimized for fast routing.
\name is the first proposal to apply and analyze the impact of multi-hop overlay routing on the Tor network, and is likely of independent interest to other distributed communication systems. 

\subsection{\name}
To reduce the latency experienced by end users of Tor, \name uses multi-hop routing as an additional overlay layer on top of Tor's onion routing protocol.
Crucially, \name is independent of Tor's circuit selection algorithm and the client, operating only between relays.
\name introduces additional hops, which we call \emph{via} relays, that tunnel traffic between relays on a Tor circuit.
Acting as a via is simply a \textit{role} that a normal Tor relay may take in addition to its usual function on circuits.
Via relays, unlike circuit relays, can be introduced after circuit establishment in response to changing network conditions without client involvement or any modification to the circuit itself.
While the basic idea of \name is simple in retrospect, multi-hop overlay routing has security implications for anonymous communication that are not present in CDNs. 

\paragraph{Security}
We demonstrate that \name can find faster paths across Tor \emph{without} the loss in anonymity experienced by other approaches. 
\name selects via relays based \emph{solely} on the adjacent circuit relays.
This process ensures that malicious vias cannot lie about their performance to artificially increase their selection probability.
Specifically, \name operates as an overlay routing layer, requiring no modification to Tor's onion routing or encryption, preserving Tor's security guarantees.
We provide a formal security analysis of the impact \name has on Tor's anonymity using the AnoA framework, which was introduced by \citet{anoa} to analyze the anonymity guarantees of Tor~\cite{mator,backes2014nothing}.
Using AnoA, we show that \name has minimal impact on security when compared to baseline Tor.
However, we find that when used in conjunction with alternative,\footnote{Tor's only deployed path selection algorithm is independent of user location.} location-aware path selection algorithms such as LASTor \cite{akhoondi2012lastor}, \name can exacerbate the existing leakage.
We validate these claims through an empirical analysis on data collected from the Tor network.

\paragraph{Latency Measurements}
To quantify the benefits of \name, we conduct latency measurements between approximately \num{400000} pairs of the \num{1000} most popular Tor relays.
We collect measurements ourselves, rather than use a general-purpose source for internet measurements such as RIPE Atlas~\cite{ripeatlas}, for two main reasons.
(1) internet routing operates at a scale and complexity that cannot easily be simulated~\cite{schlinker2019peering} and
(2) ISPs often treat Tor packets differently from other internet traffic~\cite{cangialosi2015ting}.
Using our own pairwise latency dataset we determine that, despite being much smaller than a typical CDN, Tor can still benefit from multi-hop overlay routing.

\paragraph{Ethics}
Our measurements were conducted on the live Tor network, but did \emph{not} involve any observations on Tor users or their traffic.
We underwent Tor's security review process and followed best practices to limit our impact on the Tor network.
Details can be found in Section \ref{sec:ethics}. 

\paragraph{Practicality}
While \name does require modifications to Tor relays, it does \emph{not} rely on participation of all, or even a majority of, relays and makes no assumptions about or modifications to client behavior.
Tor circuits can benefit from multi-hop overlay routing as long as any two adjacent relays on the path both support it.
The majority of our evaluation assumes that only the \num{1000} most popular Tor relays participate, but we find \name is beneficial with even fewer relays participating.
\name achieves a latency reduction of \SI{178}{\milli\second} at the 99.9\textsuperscript{th} percentile with only the \num{500} most popular relays supporting the protocol. 
As such, \name can be deployed incrementally and still provide meaningful reductions of tail latency on Tor.

\paragraph{Limitations}
Our dataset of pairwise latencies was collected from the \num{1000} most popular Tor relays. 
While these relays do see the majority of traffic on Tor~\cite{greubel2020quantifying}, they are not representative of the full network. 
The less popular relays, while not as likely to be included in circuits, may benefit similarly from \name, and could broaden the pool of available via relays.
A deployed version of \name, however, would naturally include all available relays regardless of popularity. 
As such, the scale of our dataset is strictly a limitation of our evaluation, not of \name's effectiveness in practice.

Using this data, we find that \name primarily impacts \emph{tail} latencies on the Tor network.
On average, \name reduced the RTT between a pair of relays from \SI{42.6}{\milli\second} to \SI{23.5}{\milli\second}, while at the 99.9\textsuperscript{th} percentile the RTT dropped much more substantially from \SI{487}{\milli\second} to \SI{125}{\milli\second}. 
As a result, the speedups \name offers disproportionally benefit a relatively small fraction of Tor users---approximately \num{20000} out of two million daily users select circuits that \name can speed up by \SI{120}{\milli\second} or more.

\paragraph{Contributions}
We propose \name, the first protocol to apply multi-hop overlay routing to an anonymous communication network.
\name is designed to improve performance while preserving the security guarantees of baseline Tor, preventing adversarial relays from gaining an advantage by participating in \name.
We evaluate \name using measured latencies from the live Tor network and show that \name can significantly improve tail latencies on the Tor network with minimal bandwidth overhead. 

In summary, this paper contributes: 
\begin{enumerate}
    \item 
        \name: a protocol for multi-hop overlay routing on Tor which reduces the latency experienced by Tor circuit traffic by \SI{122}{\milli\second} in the \num{99}\textsuperscript{th} percentile.
    \item 
        An evaluation of \name's performance at various levels of deployment based on measured latency between the thousand most popular Tor relays.
    \item 
        A security analysis of \name in the AnoA framework, demonstrating minimal impact to user anonymity.
\end{enumerate}

\section{Background}
\label{sec:background}

\noindent
Here, we provide background on Tor and multi-hop overlay routing, which we combine in \cref{sec:shortor} to design \name.

\subsection{Tor}

\begin{figure}[t]
     \centering
     \includegraphics[width=\linewidth]{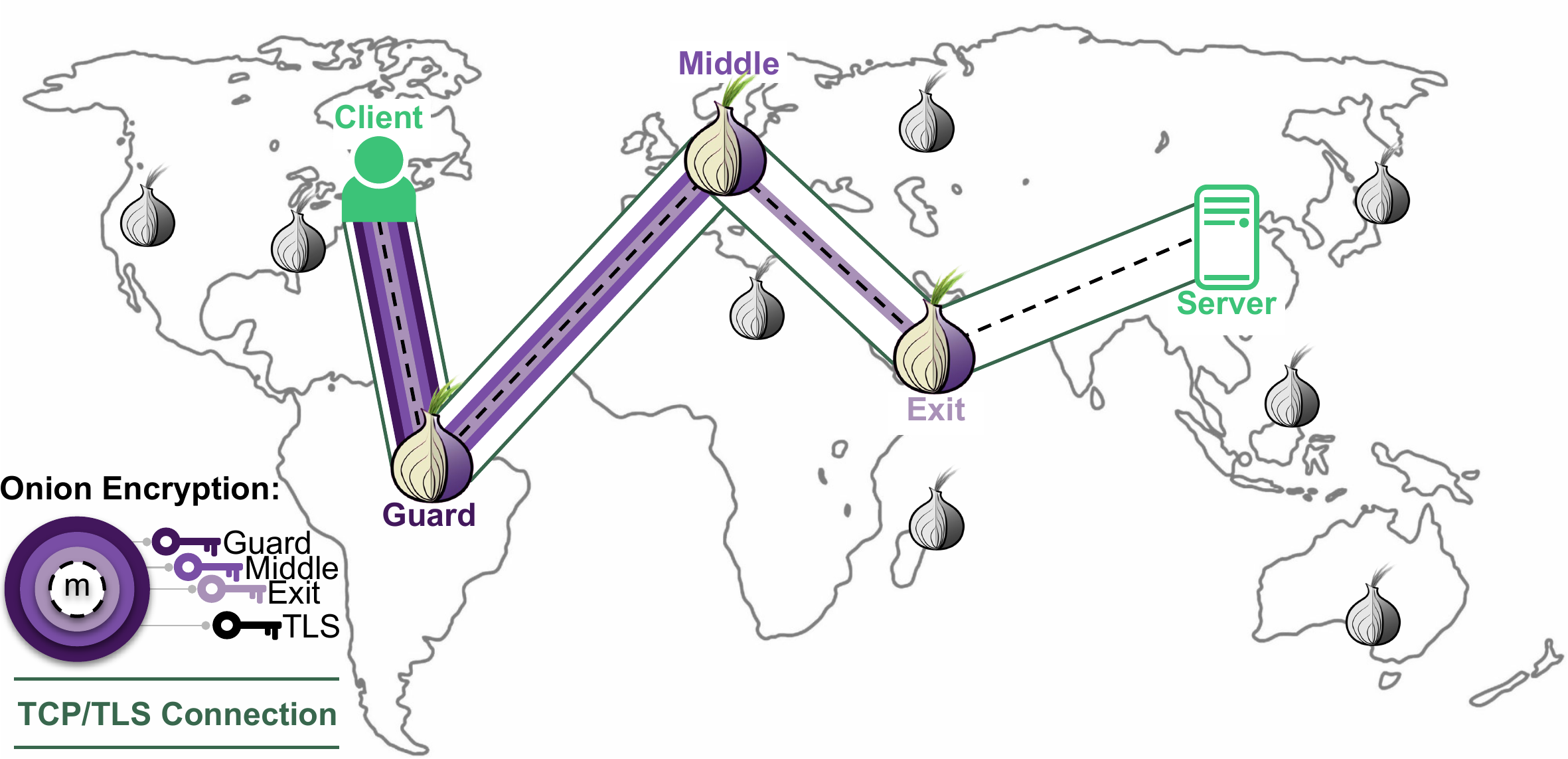}
     \caption{A Tor circuit between a client and server: Tor relays are represented by onions. The circuit is a series of connections between three relays carrying onion-encrypted Tor cells.}
     \label{fig:tor}
 \end{figure}

\noindent
The Onion Router (Tor) is a network for anonymous communication comprising approximately \num{7000}~\cite{tor-metrics-network-size} volunteer-run relays that carry user traffic.
We provide a brief overview of Tor's architecture and security guarantees.
For more details on Tor, see the Tor specification~\cite{tor-spec} or paper~\cite{tor-paper}.

\subsubsection{Onion Routing} Tor users send their traffic through the Tor network using \emph{onion routing}.
Rather than communicating directly with their destination, clients send their traffic through ``layers:'' encrypted connections to three (or more) Tor relays in sequence.
These relays form a \emph{circuit} and have fixed roles:
\begin{description}
\item[Guard\!] relays connect directly to the client and serve as an entry point into the Tor network,
\item[Exit\!] relays connect directly to the server and proxy communication on behalf of the client, and
\item[Middle\!] relays pass traffic between the Guard and Exit.
\end{description}
\Cref{fig:tor} shows a single Tor circuit between a client and server including the connections and layers of encryption involved in Tor's onion routing protocol.
The traffic flowing over a circuit is carried in fixed-size packets called \emph{cells} which are onion encrypted.
That is, cells have a layer of encryption for \emph{each} relay on the circuit.
Tor relays remove their layer of encryption when forwarding cells in the client-to-server direction and add their layer when returning the responses.
This ensures that only the Exit can remove the innermost onion layer, protecting the client’s privacy without requiring destination servers to handle onion encrypted data.

\subsubsection{Path/Circuit Selection} 
Path---or circuit---selection is the process by which Tor clients select the set of relays that will form their circuit.
This is a randomized process to ensure that the selection of relays is neither predictable nor correlated with the identities of either the client or server. 
It is, however, not \emph{uniformly} random as relays have highly variable capacities and not every relay can support the same volume of traffic. 
As such, path selection is weighted based on a relay’s available bandwidth, along with security considerations.

\subsubsection{Tor's Adversarial Model} 
\label{sec:background-tor-model}
Tor is intended to provide \textit{anonymity} to its users. 
Specifically, no adversary should be able to link the source and destination of any traffic stream across Tor.
Tor's threat model considers adversaries in the form of malicious relays as well as external observers such as users' internet service providers.
Anonymity in Tor is provided among all concurrent Tor users. 
While onion routing prevents any individual relay or localized network observer from directly linking a client to their destination, it does \textit{not} hide the fact that a client is connected to the Tor network in the first place. 
Similarly, onion routing alone does not hide which servers are the destination of Tor connections.
As such, both the \textit{volume} and \textit{diversity} of Tor users influence the degree of anonymity Tor is able to provide.
In a well-known example of this principle, the sender of a 2013 Harvard bomb threat  was identified despite their use of Tor because they were the \emph{only} client connecting to Tor from Harvard's campus at the time~\cite{tor-bomber-harvard}.

\subsubsection{Traffic Analysis}
\label{sec:traffic-analysis}
Traffic analysis attacks are a type of anonymity-compromising attack against Tor that identify features of encrypted traffic stream, such as packet interarrival times~\cite{li2018correlation}, to either: 1) recognize a previously observed stream~\cite{gilad2012spying,johnson2013users,nasr2018deepcorr,digestor}, linking it across Tor or, 2) observe a pattern corresponding to a website \textit{fingerprint} and infer the destination of traffic~\cite{jansen2018inside,fingerprinting-usenix,Bhat_2019,Rimmer_2018,sirinam2018deep}.
Both styles give the adversary an advantage in linking a client to their destination, compromising Tor's anonymity by making clients, servers, or client-server pairs more identifiable.
We give additional details on the capabilities of such adversaries and their impact on Tor in \Cref{sec:secanalysis}.

\begin{figure}[t]
     \centering
     \includegraphics[width=\linewidth]{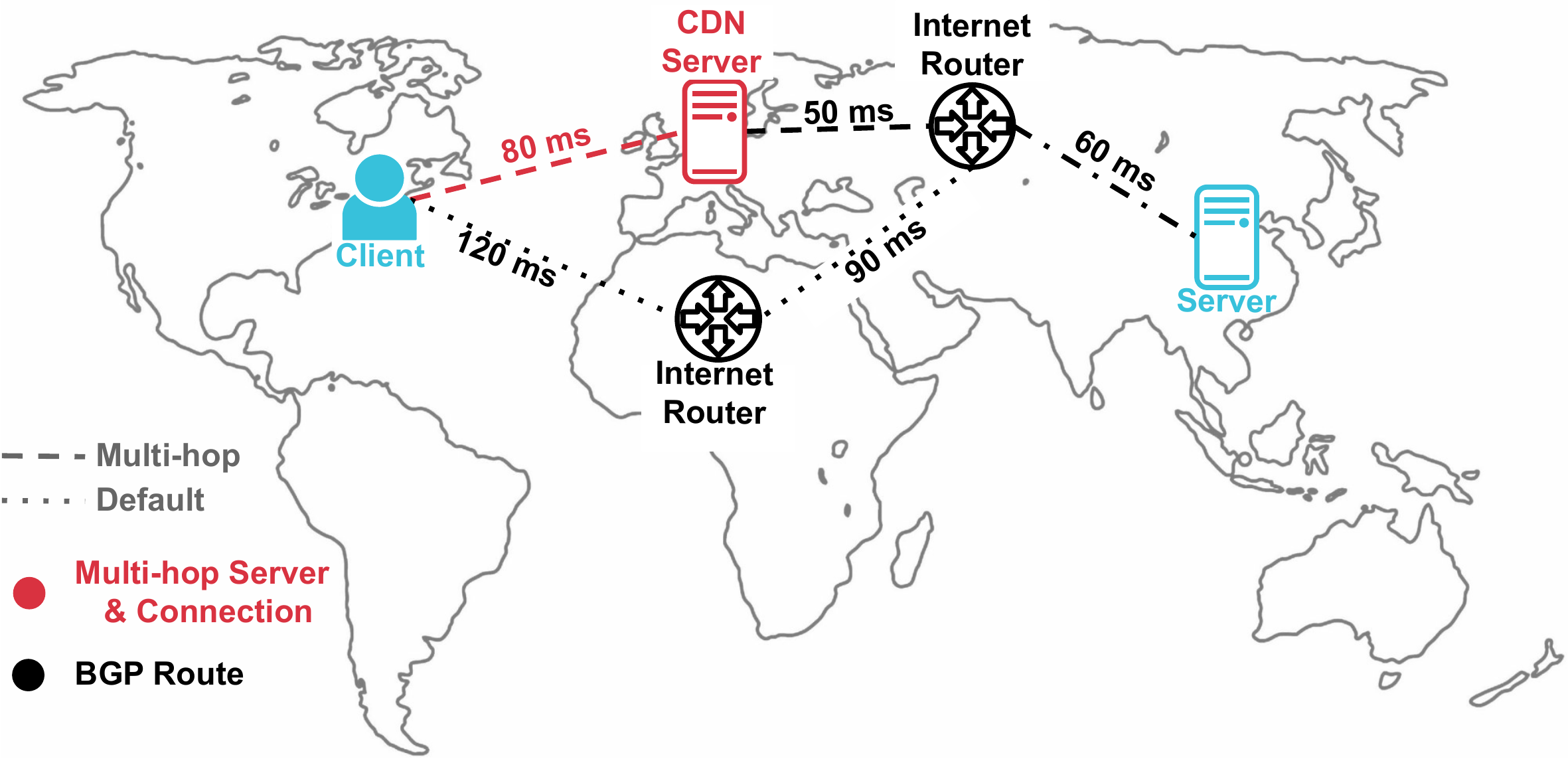}
     \caption{Multi-hop overlay routing as in a CDN: the client avoids a slow BGP route to the blue server by addressing data to the red CDN server, which then forwards the traffic.}
     \label{fig:overlay-routing}
 \end{figure}
\subsection{Multi-hop Overlay Routing}

Multi-hop overlay routing is a technique that introduces intermediate waypoints into the connection between a client and server for the purpose of altering the route their traffic takes across the internet.
There are many motivations for this technique---Tor's onion routing is itself an example of multi-hop overlay routing that provides anonymity by masking the direct relationship between client and destination server.
More commonly, CDNs route their traffic over a multi-hop overlay in order to reduce the latency of their connections as illustrated in \Cref{fig:overlay-routing} \cite{su2009drafting,bornstein2013optimal}.
Two such examples are Cloudflare’s Argo~\cite{argo} and Akamai’s SureRoute~\cite{sureroute}.
Rather than relying solely on the Border Gateway Protocol (BGP) to decide routes for their traffic, both Argo and SureRoute instead establish intermediate connections via their own servers.
By routing their traffic via these intermediate waypoints they are able to identify and use a route which may be faster than the route selected by BGP. This is possible because BGP is subject to routing policies based on business relationships, not solely on shortest paths~\cite{seetharaman2009interdomain}.

Importantly, this technique is an \emph{overlay}---it runs on top of standard BGP without modifying any of the underlying protocols.
This is achieved by establishing pairwise TCP connections between each of the intermediate points on the multi-hop overlay route rather than a single direct connection between the client and server.
As such, standard BGP handles the route used \textit{between} hops while the overlay protocol adjusts the ultimate path between client and server via the \textit{placement} of its waypoints.

\section{ShorTor}
\label{sec:shortor}
We propose \name,  a protocol for reducing the latency of connections over Tor.
Like other such proposals, \name preferentially selects faster routes across the Tor network for client circuits.
In prior work, fast \emph{routes} across Tor are equivalent to fast Tor \emph{circuits}---Tor clients simply optimize for latency when selecting relays for their circuits. 

Instead, \name creates a \emph{multi-hop overlay} on top of the Tor protocol to improve latency as shown in \Cref{fig:shortor}. 
Rather than altering the circuit selection process to favor faster paths, \name changes the routing \emph{between} relays on existing circuits.
It does this by offering circuit relays the option to route their traffic through an additional Tor relay rather than directly to the next hop.
These intermediate hops, called \emph{via relays}, are chosen on-demand by the relays themselves instead of in advance by clients.
Via relays are \textbf{not} part of client circuits and do not participate in onion routing or encryption.

\begin{figure}[t]
     \centering
     \includegraphics[width=\linewidth]{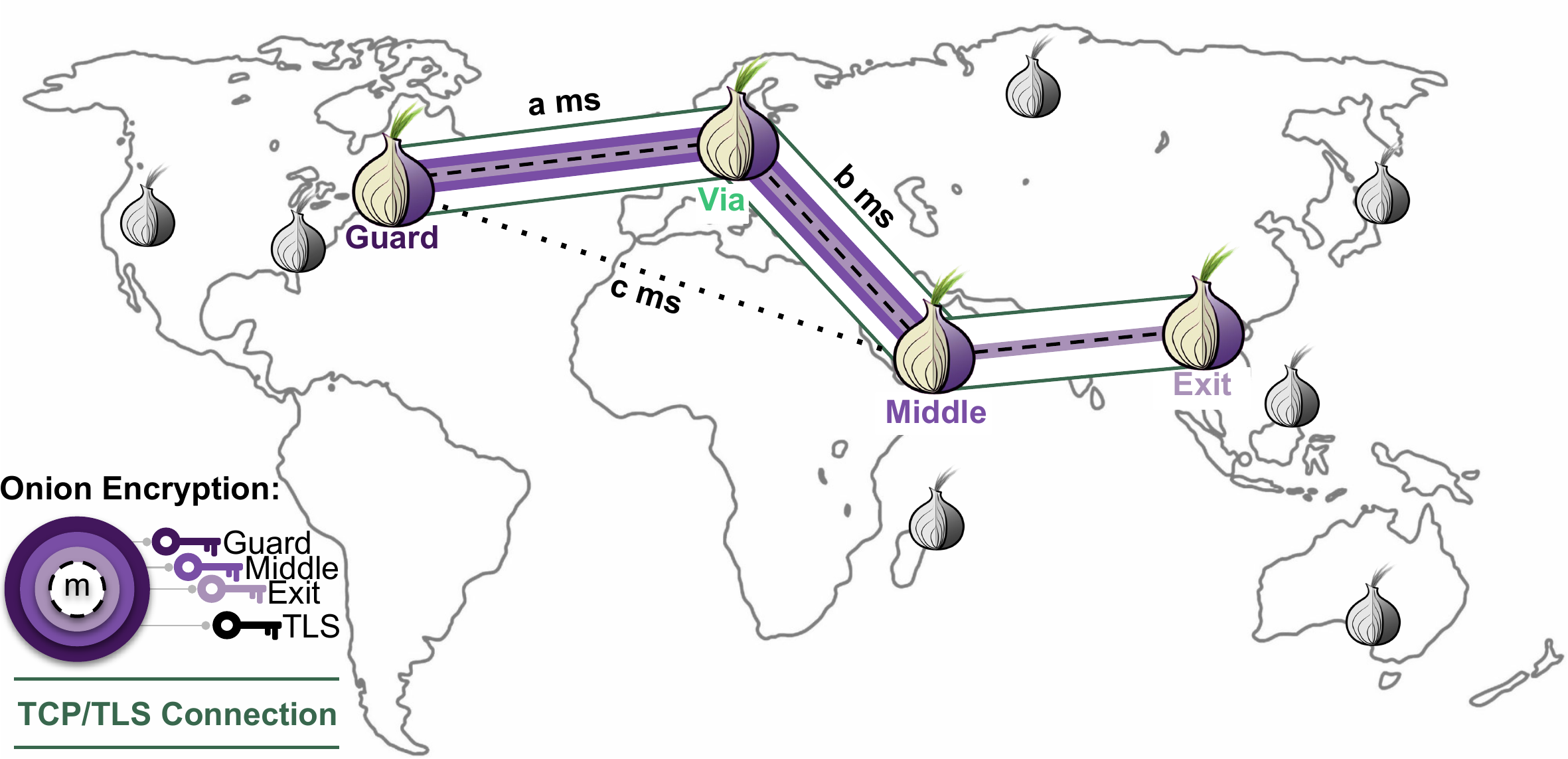}
     \caption{A Tor circuit routing using a via relay between the guard and middle.  A via relay will be used when the latency over the via (a+b ms) is less than that of the direct connection (c ms). The via does \textit{not} participate in onion routing. For clarity, and because Shortor only operates between Tor relays, the client and server are not shown.}
     \label{fig:shortor}
 \end{figure}

By routing as an overlay rather than altering circuit composition, \name avoids security pitfalls of prior works while still providing a substantial reduction in latency on the Tor network.
Directly optimizing for faster circuits, as in past proposals, has the unfortunate side effect of creating a correlation between the relays a client chooses and the client's location.
Via relays in \name are chosen \emph{only} based on the circuit relays and inherit their relationship to the client---if circuits are chosen independently, as in the default Tor circuit selection, then via relay choices leak no information about the client.
We discuss the implications of running \name with alternative circuit selection techniques in \cref{sec:secanalysis} and \cref{sec:discussion}.

\name's design gives it several advantages over proposals that modify circuit selection:
\begin{enumerate}
    \item \textbf{Security:} Routes in \name are independent of the client and destination.
    \item \textbf{Agility:} \name can modify its routes as needed, not just during circuit construction.
    \item \textbf{Compatibility:} \name operates with \emph{any} circuit selection algorithm, making it modular and compatible with future changes to the Tor protocol.
\end{enumerate}
While we describe \name in Tor-specific terms, we note that it applies to other distributed communication systems as well.

\subsection{Security Model}
\label{sec:shortor-security}
\name inherits Tor's adversarial model, as described in \cref{sec:background-tor-model}.
It is designed to preserve the same anonymity guarantees against an adversary identifying the sender or recipient of a traffic stream which we define more formally in \cref{sec:secanalysis}.
In particular, \name requires no modification to Tor's baseline circuit selection or encryption and preserves independence between circuit choice and the identities of the client and server.
However, \name \emph{does} necessarily change the number and distribution of relays that may see a given traffic stream, which could potentially be exploited by an adversarial Tor relay to deanonymize a larger share of Tor traffic. 
We formally consider the anonymity impact of \name in our security analysis (\Cref{sec:secanalysis}).

\begin{figure}[t]
\begin{protocol}{\name}{shortor}
\medskip
\protocolheader{Circuit Relay}

\algcomment{Circuit relays conduct data races to determine if 
a suitable via exists
\algcommentbreak between themselves and the next relay on the circuit. }\\
\algcomment{Parameter: $\variable{self}$, Tor ID for this relay.}\\
\algcomment{State: $\variable{Routes}$, the routing table for each circuit.}

\begin{itemize}[leftmargin=10pt]
\item 
{$\algname{\name.ChooseVia}(\variable{circ}, \variable{dst}, \variable{candidates})$}
\begin{algenumerate}
 	\item $\variable{via} \gets \algname{Race.Run}(\variable{candidates}, \variable{circ}, \variable{dst})$ \algcomment{(\cref{proto:data-race}).}
 	\item \If $\variable{via} = \bot$, \Then return. \hfill\algcomment{no via faster than default route.}
 	\item $\variable{Routes}[\variable{circ}].\variable{via} \gets \variable{via}$.
\end{algenumerate}

\medskip
\item 
{\algname{\name.HandleTraffic}(\variable{cell})}
\begin{algenumerate}
  \item \If $\variable{cell.cmd} = \constant{VIA}$ \Then return $\algname{\name.HandleVia}(\variable{cell})$. 
  \item $\variable{candidates} \gets \algname{Latencies.ViasFor}(\variable{cell.next})$. \algcomment{(\cref{proto:latency-measurements})}
  \item \If $\variable{Routes}[\variable{cell.circ}] = \bot$ \Then \hfill\algcomment{no routing table entry.}
  \begin{algthenenumerate}[label=3.\arabic*:]
        \item $\algname{\name.ChooseVia}(\variable{cell.circ}, \variable{cell.next}, \variable{candidates})$. 
  \end{algthenenumerate}
  \item $\variable{via} \gets \variable{Routes}[\variable{cell.circ}].\variable{via}$.
  \item \If $\variable{via} = \bot$  \Then proceed with default cell routing and return.
  \item Set $\variable{cell.cmd} = \constant{VIA}$ and $\variable{cell.prev} = \variable{self}$.
  \item Send $\variable{cell}$ to relay $\variable{via}$.
  \item \If no response from $\variable{via}$ \Then 
        \begin{algthenenumerate}[label=8.\arabic*:]
            \item $\variable{candidates} \gets \variable{candidates} \setminus \{\variable{via}\}$
            \item $\algname{\name.ChooseVia}(\variable{cell.circ}, \variable{cell.next}, \variable{candidates})$. 
            \item $\algname{\name.HandleTraffic}(\variable{cell})$
        \end{algthenenumerate}
\end{algenumerate}
\end{itemize}

\bigskip
\protocolheader{Via Relay} \\
\algcomment{Via relays forward cells between circuit relays. 
\algcommentbreak Via relays do not perform onion decryption and only forward traffic \algcommentbreak if they have the available resources (i.e., bandwidth). }

\begin{itemize}[leftmargin=10pt]

\item 
{$\algname{\name.HandleVia}(\variable{cell})$}
\begin{algenumerate}
   \item $\variable{route} \gets \variable{Routes}[\variable{cell.circ}]$
   \item \If under heavy load or $\variable{route} = \bot$ \Then drop cell and return.
   \item Forward cell to $\variable{route.next}$.
   \item Forward response from $\variable{route.next}$ to $\variable{route.prev}$.
\end{algenumerate}
\end{itemize}
\end{protocol}
\end{figure}

\subsection{\name Protocol}
\label{sec:shortor-proto}
\name introduces only one additional step into Tor's routing procedure.
Rather than forwarding cells solely along previously established circuits, relays establish transient alternate routes between themselves and the next hop on their circuits.
These alternate routes forward traffic via an additional Tor relay rather than sending it directly to the next relay on the circuit.
As such, we refer to the intermediate hops between circuit relays as \emph{via relays}, the connection between a circuit relay and a via relay as a \emph{via connection}, and the communications over this connection as \emph{via traffic}.

Note that the `circuit' and `via' modifiers denote different roles a relay may play in \name, but do \emph{not} correspond to different physical entities. 
A via relay is simply a regular Tor relay that has been chosen as an intermediate hop for some circuit rather than as part of the circuit itself. 
Any relay in Tor can act as both a circuit and a via relay simultaneously for different traffic streams.

\subsubsection{\name Protocol Stages}
\label{sec:shortor-proto-stages}
The \name protocol proceeds in several stages. 
On an ongoing basis, relays take measurements of their round-trip latencies with other relays ($\algname{Latencies.Update}()$, \cref{proto:latency-measurements}).
Circuit relays use these measurements to choose candidate via relays for outgoing traffic ($\algname{\name.ChooseVia}()$, \cref{proto:shortor}).
They perform a ``data race'' to choose the fastest path ($\algname{Race}$, \cref{proto:data-race}).
If a route with a via relay is faster than the default path, the circuit relay updates its routing table.
In the steady state, the circuit relay handles traffic for its circuits as usual, but directs it to the via relay rather than to the next circuit hop.

\paragraph{Establishment} When establishing a connection for a given circuit, relays on that circuit will run $\algname{Latencies.ViaFor}()$ (\cref{proto:latency-measurements}) to obtain a shortlist of potential vias.
These vias are those that have recently been observed to provide the largest latency improvements over the default path between this relay and the next hop on its circuit.
The circuit relay then performs a data race over each of the candidate vias ($\algname{Race.Run}()$, \cref{proto:data-race}). 
The finish line of this race is the next relay on the relevant circuit which can report to the starting relay which of the data race cells arrived first. 
We provide details on the selection of candidate via relays in \cref{sec:latency-measurements}. 

\paragraph{Routing} While establishing a via connection, both the circuit and via relays must update their routing tables: circuit relays note which via to send cells to, while vias record which circuit relay should receive their forwarded traffic.
To do this, we simply introduce new fields in Tor cell headers and routing tables, described in \Cref{fig:via-cell}.
These allow relays to recognize traffic streams and route them to the correct next hop. 

\begin{figure}[t]
    \centering
    \includegraphics[width=\linewidth]{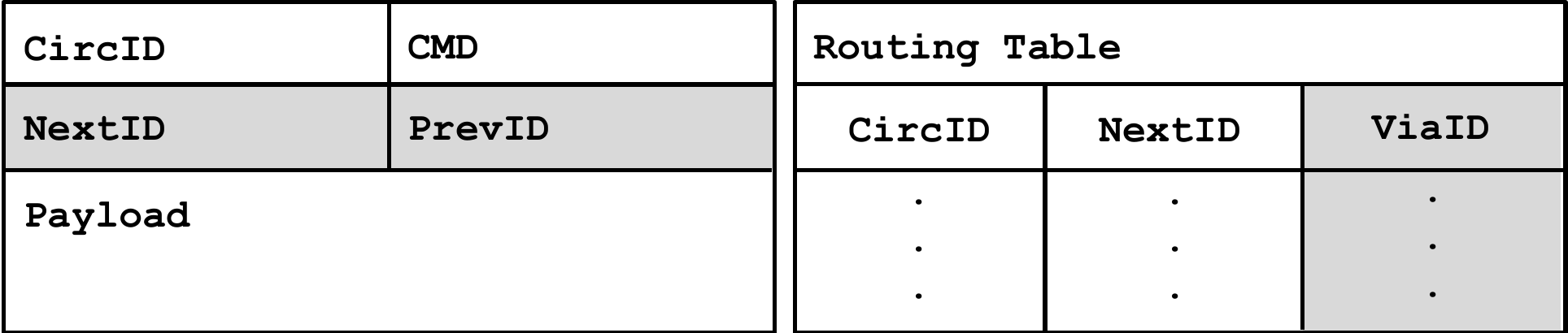}
    \setlength{\belowcaptionskip}{-10pt}
    \caption{Via Cell and Routing Table (fields added to baseline Tor highlighted in grey): Via cells contain additional header fields with routing info. This information is used by circuit relays to populate the routing table with which via (if any) each circuit should be routed through, while via relays use it to record where to forward cells from a given circuit.}
    \label{fig:via-cell}
 \end{figure}

\paragraph{Steady-State} Traffic over via connections that have already been established is handled similarly to regular Tor traffic.
Via relays simply forward the received traffic stream according to their routing table for the circuit.
As via relays are not part of client circuits, they do \emph{not} perform onion decryption/encryption prior to forwarding cells.
Circuit relays also function as in baseline Tor except in cases where their routing table for a circuit contains a via relay.
Then, the relay will alter the header on cells for these circuits as shown in \Cref{fig:via-cell} and send the cells to the indicated via rather than to the next relay on the circuit.
Periodically, relays on a circuit can repeat the data race to determine whether a given via is still the fastest option based current network status.

\subsubsection{Latency Measurements}
\label{sec:latency-measurements}

\begin{figure}
\begin{protocol}{\algname{Latencies}}{latency-measurements}
    \protocolheader{All Relays}\\
    \algcomment{Parameter: $\ell$, how many routes to keep.} \\
    \algcomment{Parameter: IDs of all $n$ active Tor relays: $\variable{Tor} = \{\variable{id}_1, \dots, \variable{id}_n\}$.} \\
    \algcomment{State: table $\variable{RTTs}$: ping times to \emph{each} other relay.} \\
    \algcomment{State: table $\variable{NextHop}$: for each relay, the top $\ell$ candidate vias $(\variable{id}, \variable{rtt})$.}
    
    \begin{itemize}[leftmargin=10pt]
        \item {$\algname{Latencies.Update}()$}\\
        \algcomment{Keep $\variable{RTTs}$ and $\variable{NextHop}$ tables up-to-date.}\\
        \algcomment{Run periodically (once per day).}
        \begin{algenumerate}[leftmargin=10pt]
            \item \For {~$\variable{id} \in \variable{Tor}$}:
            \begin{algthenenumerate}[label=1.\arabic*:]
            	\item Ping relay $\variable{id}$ to estimate round-trip time (RTT).
                \item Set $\variable{RTTs}[\variable{ID}]$ to estimated value.
                \item Remove $(\_, \variable{rtt}) \in \variable{NextHop}[\variable{id}]$ with $\variable{rtt} \ge \variable{RTTs}[\variable{id}]$.
                \item $\variable{RTTs}_{\variable{id}} \gets \algname{Latencies.RTTs}()$ (remote call to relay $\variable{id}$).
                \item \For $\variable{via} \in \tor$
                \begin{algthenenumerate}[label=1.5.\arabic*:]
                    \item $\variable{rtt} \gets \variable{RTTs}[\variable{via}] + \variable{RTTs}_{\variable{id}}[\variable{via}]$.
                    \item \If $\variable{rtt} \ge \variable{RTTs}[\variable{id}]$ \Then \textbf{continue}  \hfill\algcomment{no speedup.}
                    \item Add $(\variable{via}, \variable{rtt})$ to $\variable{NextHop}[\variable{id}]$, keeping fastest $\ell$ entries.
                \end{algthenenumerate}
            \end{algthenenumerate}
        \end{algenumerate}

        \medskip
        \item {$\algname{Latencies.RTTs}()$}
        \begin{algenumerate}
            \item Output $\variable{RTTs}$.
        \end{algenumerate}

        \medskip
        \item {$\algname{Latencies.ViasFor}(\variable{id})$}
        \begin{algenumerate}
         	\item Output $\variable{Routes}[\variable{id}]$. \hfill\algcomment{up to $\ell$ candidate via relays.}
        \end{algenumerate}
    \end{itemize}
\end{protocol}
\end{figure}

\name relies on two forms of latency measurements (1) an up-to-date table of probable via candidates for each relay pair (\cref{proto:latency-measurements}) and (2) the data race that determines the fastest of the candidates (\cref{proto:data-race}).

\subsubsection{Pairwise Latency}
\label{sec:pairwise-latency-alg}
\name requires latency measurements between Tor relays to narrow down the set of potential via relay options for a circuit.
In $\algname{Latencies.Update}()$ (\cref{proto:latency-measurements}), each Tor relay collects this latency information as needed, distributing the involved storage, computation, and network load across the Tor network.
This is in contrast to the centralized measurement methodology we use to evaluate \name in \cref{sec:measurements} which, while useful for this work, would not meet the performance needs of the live \name protocol.

To participate in the distributed latency measurements of \cref{proto:latency-measurements}, each relay maintains their estimated round-trip latency to every other relay along with a list of ``candidate'' via relays.
The candidates are computed by each relay using the round-trip latency tables for itself and for the destination relay based on latencies provided by the \emph{destination}.
\name uses latencies reported from the destination for security reasons: an honest destination will not recommend a dishonest via relay disproportionately often, while a dishonest destination was \textit{already} on the circuit and gains nothing by lying.
The list of candidate vias is used to inform the data race which will select the fastest via from the list at the time of the race. 

\subsubsection{Data Race}
\label{sec:data-race}
Directly choosing via connections based on measured latencies has several potential drawbacks.
First, the measured latencies are round-trip, while network paths are directional: the fastest path from relay A to relay B might be different from the fastest path from relay B to relay A.
Timestamping at the destination halfway through the round trip \emph{does not} address this issue, as it becomes impossible to distinguish between imperfect clock synchronization and path asymmetry.
Second, latencies change in real-time in response to network conditions, like congestion at relays or on internet links.
Third, latencies might be inaccurate due to measurement errors or even misreporting by malicious relays; we must take care to prevent such relays from seeing disproportionate amounts of traffic.
As such, measured latencies alone are insufficient.

Instead circuit relays choose the fastest via using a ``data race:'' sending packets along different routes to see which arrives at the destination first ($\algname{Race.Run}()$, \cref{proto:data-race}).  
The starting relay \emph{simultaneously} sends a copy of a data race cell to each prospective via relay, and one copy directly to the destination.
The destination relay, which is the next hop on the circuit, responds \emph{only} to the first of these cells to arrive.

Data races are \emph{directional}---relays can identify the fastest path in each direction separately.
Additionally malicious via relays cannot report lower latencies to artificially increase their odds of being selected.
Data race cells are not forgeable by the via, so the via must wait to receive the cell from the source circuit relay before delivering it to the destination relay. 
Thus, the via cannot artificially reduce its perceived latency below the true time it takes to forward the cell.

\begin{figure}
\begin{protocol}{$\algname{Race}$}{data-race}
\protocolheader{Circuit Relay:}\\
\algcomment{Find the fastest via relay for reaching the destination relay.}\\
\algcomment{Parameter: $\variable{myId}$, Tor ID of this relay.}\\
\algcomment{State: $\variable{Seen}$, a set of circuit IDs for which this relay has seen data race packets.}
\begin{itemize}[leftmargin=10pt]
\item {$\algname{Race.Run}(\variable{vias}, \variable{circ}, \variable{dst})$}\\
\alginput{Candidate vias $\variable{vias} = \{v_1, \dots, v_\ell\}$, destination relay \variable{dst}}\\
\algoutput{Fastest via $v$ if one exists; $\bot$ otherwise.}
\begin{algenumerate}
   \item Create data race cell \variable{cell} with fields
     	    $\variable{cmd} = \constant{RACE}$,
     	    $\variable{prev} = \variable{self}$,
     	    $\variable{next} = \variable{dst}$, and
      	    $\variable{circ} = \variable{circ}$.
 	\item \For {~$\variable{via} \in \variable{vias}$}: send $\variable{cell}$ to $\variable{via}$.
 	\item Send data race cell directly to \variable{dst}.
 	\item $\variable{resp} \gets$ response from \variable{dst}.
 	\item Output $\variable{resp}.\variable{via}$. \hfill\algcomment{May be $\bot$ if no via provides speedup.} 
\end{algenumerate}

\medskip
\item {$\algname{Race.Respond}(\variable{cell})$}\\
\alginput{Data race cell from source relay (\variable{sender}).}
\begin{algenumerate}
 	\item \If $\variable{cell.circ} \in \variable{Seen}$ \Then drop cell and return.
 	\item Add $\variable{cell.circ}$ to $\variable{Seen}$.
 	\item \If $\variable{cell.prev} = \variable{sender}$ \Then $\variable{via} \gets \bot$
 	\item \Else $\variable{via} \gets \variable{sender}$.
 	\item Send response $\variable{resp}$ to \variable{cell.prev} with $\variable{resp.via} = \variable{via}$.
\end{algenumerate}
\end{itemize}

\bigskip
\protocolheader{Via Relay:}\\
\algcomment{Via relays update their routing tables to forward traffic  on a stream, \algcommentbreak provided sufficient resources are available to do so.}\\
\algcomment{State: $\variable{Routes}$, the routing table for each circuit.}
\begin{itemize}[leftmargin=10pt]
\item {$\algname{Race.ViaForward}(\variable{cell})$}\\
\alginput{Data race cell $\variable{cell}$ with $\variable{cell.cmd} = \constant{RACE}$.}
\begin{algenumerate}
    \item \If under heavy load \Then drop cell and return.
    \item Add \variable{cell.prev} and $\variable{cell.next}$ to $\variable{Routes}[\variable{cell.circ}]$.
    \item Forward the cell to relay $\variable{cell.next}$.
\end{algenumerate}
\end{itemize}
\end{protocol}

\end{figure}

\subsubsection{Avoiding Traffic Loops}
We define a \textit{loop} to occur when the same traffic stream passes through a relay more than once.
This is an issue as such relays could utilize traffic correlation to identify the previously seen traffic stream, thus learning a larger portion of its path through Tor than they should have been privy to.
Tor only builds circuits using distinct, unrelated relays to ensure that circuits contain no loops.
However, because \name selects via relays separately from the circuit selection process, care must be taken to avoid loops.

In order to provide the same guarantee as Tor, we require that \name is applied only to circuits of length exactly three (the default in Tor) and that only a single via is used between any pair of relays. 
This ensures that the middle relay of a circuit is capable of observing \textit{all} vias on that circuit and enforcing the same guarantees as for circuit relays.
That is, in either direction of a circuit, the middle relay will not choose to use a via that is already in use for the prior hop or is related to a relay on the circuit.
We elaborate on security in \cref{sec:secanalysis}, but note here that a malicious middle relay gains no advantage by failing to enforce this guarantee, as it already knows the identities of both the guard and exit relays and does not need to correlate traffic across the via to get this information.

\subsubsection{Stability}
\label{sec:stability}
\name's distributed via selection protocol must avoid \emph{oscillations} where circuit relays swap back and forth between vias.
As an example, without appropriate precautions, a cycle could form where traffic streams dropped from an overwhelmed via all divert to the same alternate via, subsequently overwhelming that via and causing the streams to revert to the original choice, and so on. 
We note that this situation is not prohibitive: CDNs use a similar overlay routing technique in practice, carrying substantial portions of internet traffic, without such stability problems.

We mitigate the risk of this situation in \name through backoff and capacity parameters in the data race.
We note that races are an integral component of the \name protocol and are conducted to identify faster routes for traffic, not for stability purposes. 
These backoff and capacity parameters simply ensure that races avoid oscillations between vias.

Circuits never attempt to send traffic through a via without first conducting a race (though they may fall back to their direct path at any point).
First, vias without available capacity will drop data race cells, preventing them from being selected at the small cost of processing a single packet.
Second, upon being dropped from a via path, circuit relays will apply randomized exponential backoff and will not include that via in data races again until a set period of time has passed.
The exact parameters for via capacity and backoff timing are network-dependent and may evolve based on the current state of the Tor network.
However, because \name is an optional performance-enhancement, these values can  initially be set conservatively and then decreased adaptively.

\subsection{Integration with Tor}
\label{sec:tor-integration}
While the \name approach has potential applications to other networks, we designed and evaluated \name's protocol to integrate with Tor specifically. 
Maintaining Tor's existing security guarantees is one main focus of this design and informed the structure of our data races and avoidance of loops.
However, successful integration with Tor also requires that \name be \textit{deployable}.
In this section, we discuss the components of \name that are most relevant Tor deployment, including support for load balancing and fairness, required modification to Tor relays, and incremental deployment.

\subsubsection{Load Balancing \& Fairness}
\label{sec:fairness}
Fairness to circuit traffic and load balancing are both necessary to ensure that \name does not inadvertently increase latency for some circuits as a consequence of reducing it on others.
This could happen if via traffic was allowed to consume more resources than a relay had available to spare, resulting in increased processing times or congestion at the relay.
\name provides both fairness and load balancing through the same mechanism: prioritizing circuit traffic over via traffic.
Tor already recognizes different traffic priorities---web browsing is prioritized over large file downloads~\cite{tor-spec}. 
We extend this to ensure that relays will preferentially schedule traffic from circuit queues over via queues (\Cref{fig:tor-queuing}).

\begin{figure}[t]
     \centering
     \includegraphics[width=\linewidth]{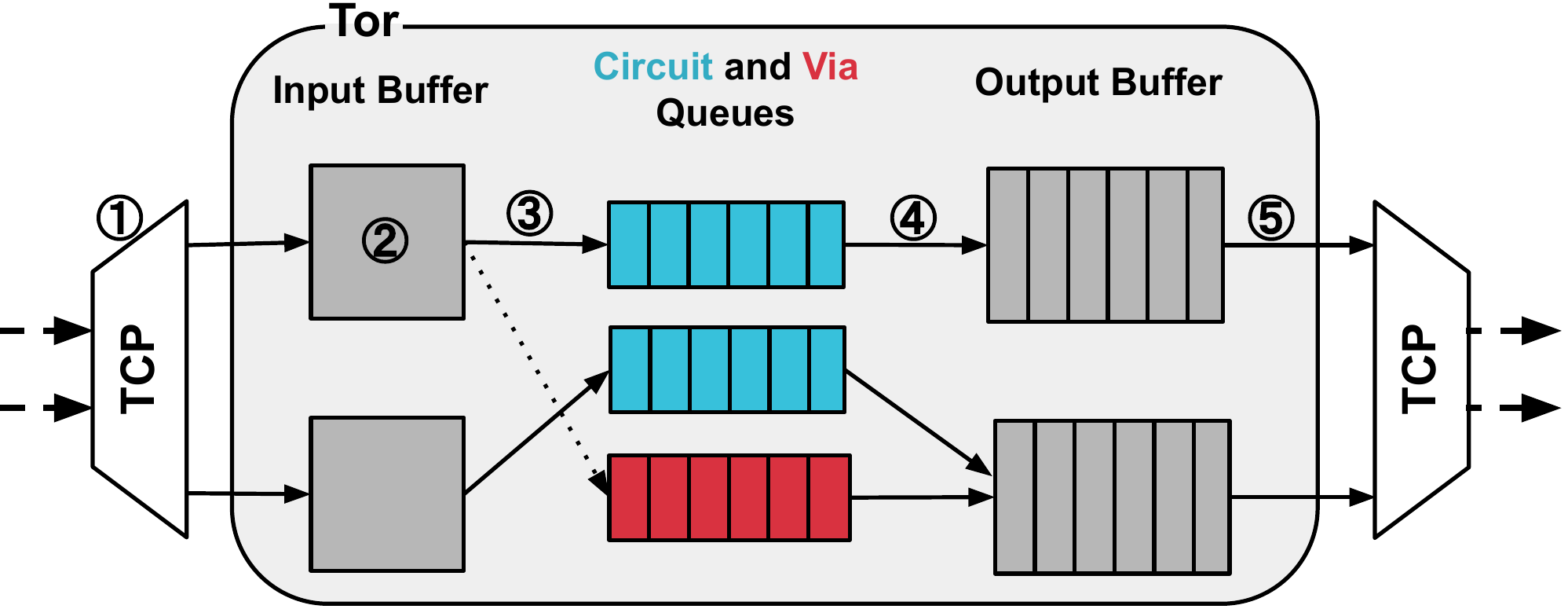}
     \caption{Tor relay with circuit and via traffic (queuing architecture unmodified from baseline Tor).
     \circnum{1} Via and circuit traffic are multiplexed on a TCP connection entering the Tor relay. \circnum{2} The TLS layer is decrypted and circuit cells are onion encrypted/decrypted. \circnum{3} Circuit and via cells are sent to their individual queues. \circnum{4} Cells are scheduled for release to the output buffer based on priority order. \circnum{5} The contents of the output buffer are encrypted using TLS, then sent to the kernel for transit over a TCP connection to the next hop.}
     \label{fig:tor-queuing}
 \end{figure}

Circuits will select vias that have lower latency than the default path, \emph{including} the transit time through the via itself. 
This is very important as relay congestion and the associated queuing delays are a primary source of latency in Tor~\cite{jansen2018kist,jansen2014never,jansen2017tor}.
Congestion at a via will appear naturally during the data race in the form of increased latency or could be indicated explicitly by dropping race packets.
In addition, vias are \textit{transitory} and can be dropped or swapped at will with minimal cost compared to that of circuit construction/teardown.

As such, \name ensures that: (1) circuit traffic on a relay is never delayed by via traffic and (2) load from via traffic is distributed only across relays with available capacity.

\subsubsection{Tor Modifications}
\name's primary modification to Tor is the introduction of data races, all other components are simple extensions of Tor's existing mechanisms for routing and prioritizing circuits. 
To support \name{}, Tor relays (though not clients) require additional protocol messages, a new data path for via traffic, and state for managing via traffic.
The protocol requires new cell headers for data races, ping, and via traffic.

Specifically, via traffic needs a new priority level lower than circuit traffic (optionally, this level can be higher than that of bulk download traffic, such as torrenting, which is currently of lower priority than circuit traffic)~\cite{tor-spec}.
Incoming via traffic needs a new data path that bypasses onion encryption and decryption.
Relays must also handle ping and data race traffic as specified in \Cref{proto:latency-measurements} and \Cref{proto:data-race}.
Finally, relays must hold two additional pieces of state: first, a new field in the routing table to indicate the via (if any) for each circuit; second, the list of candidate via nodes for each possible next hop (see \cref{sec:latency-measurements} for details).

These modifications are relatively minor, do not touch Tor's onion encryption layer, and represent an optional overlay on Tor's routing.
We discuss more details of required modifications in \Cref{appendix:tor-modifications}, but note here that the high up-front cost of integrating and deploying modifications to the Tor protocol was a large factor in the ultimate design of \name.
This consideration motivated \name's construction as an extension to Tor's existing architecture that operates largely separately from the baseline protocol.

Furthermore, \name's modifications are configurable, trivially backwards compatible,\footnote{Relays lacking support for \name simply route as usual without any vias.} and support incremental deployment.
This allows relay operators to choose whether to support \name and how much capacity to dedicate to the protocol. 
As shown in \cref{sec:eval-incremental}, \name can substantially reduce tail latencies even with relatively low support.
This is important, as it minimizes the risk of up front development efforts being wasted due to slow deployment.

\subsubsection{Incremental Deployment}
\label{sec:incremental}
Tor relays are volunteer run and notoriously slow to update~\cite{tor-metrics}.
As such, any proposal that requires support of all---or even a majority of---Tor relays is unlikely to be effective. \name is incrementally deployable and improves the latency of any Tor circuits that meet the following two requirements: (1) two adjacent circuit relays support \name and (2) some other relay supporting \name provides a faster path between the two circuit relays.
Because Tor does not select its relays with uniform probability, a small set of popular relays could meet these conditions for many circuits without support from the rest of the network.
We demonstrate this concretely in \cref{sec:eval-incremental}.

Security is another important consideration---incremental deployment inherently creates differences between Tor clients or relays that have adopted a modification and those who have not.
This has been an issue for client side proposals as anonymity in Tor relies on all clients behaving \emph{uniformly}~\cite{rochet2020claps,mitseva2020security,mator,anoa-mators-thesis,anoa,geddes2013how,wacek2013empirical}.

\name avoids this issue entirely as it is a fully server-side protocol that does not require participation from, or modify the behavior of, Tor clients in any way.
So, while \name is an observable modification to the Tor protocol,\footnote{Both adversarial relays and network adversaries can likely detect when traffic is routed using \name as opposed to baseline Tor.} it is in no way correlated to client identity.
As such, support for \name \emph{cannot} be used to distinguish between clients.
In fact, Tor clients should \emph{not} try to preferentially select relays with support for \name.
While this would improve their latency, it would also differentiate them from Tor clients following Tor's baseline circuit selection algorithm, reducing their anonymity.

\section{Evaluation}
\label{sec:measurements}
 \begin{figure*}[t]
     \centering
     \includegraphics[width=\linewidth]{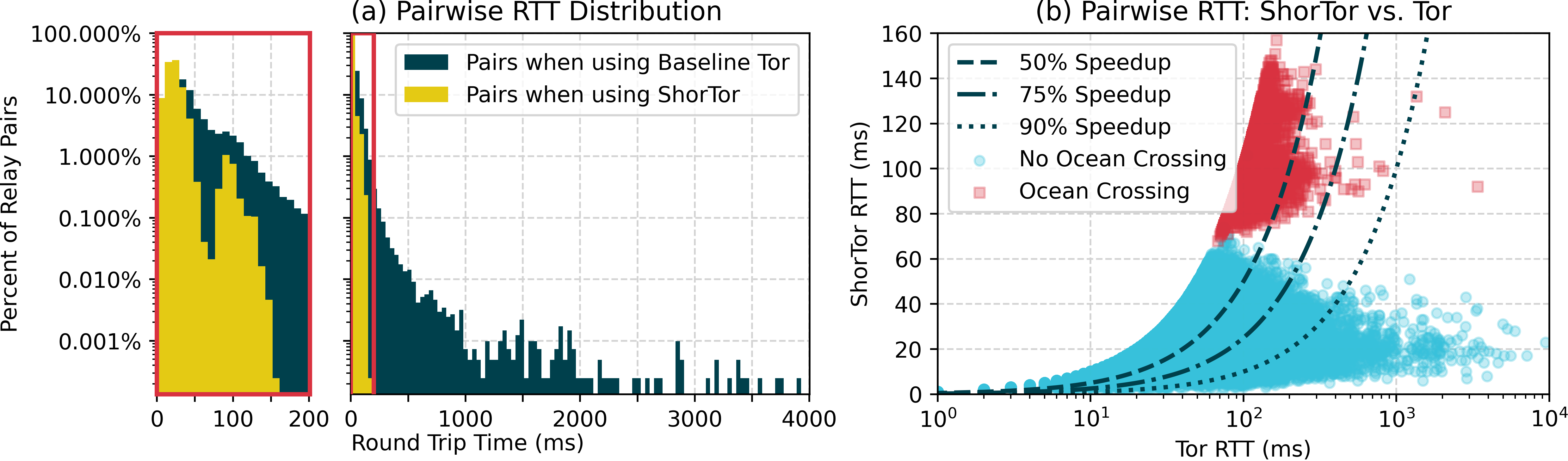}
     \caption{ \textit{(a) Left:} Round-trip times (RTTs) measured between Tor relays vs. RTTs of those \emph{same} pairs when using \name as a percentage of all \num{406074} pairs. We show an expanded view of the first \SI{200}{\milli\second} on the left. For readability, this figure omits 10 pairs with RTTs between \SI{4019}{\milli\second} and \SI{9415}{\milli\second}. \textit{(b) Right:} Relative latency of each relay pair when routing via \name vs. the default route. \SI{25.4}{\percent} of pairs have a latency reduction of at least \SI{50}{\percent}, \SI{9.4}{\percent} at least \SI{75}{\percent}, and \SI{1.2}{\percent} at least \SI{90}{\percent}.}
     \label{fig:rtt-dist}
 \end{figure*}

We evaluate \name using a dataset of approximately \num{400000} latency measurements we collected from the live Tor network over the course of \num{42} days during summer 2021.
Our measurements allow us to compare the direct latency between relays to the latency when routing through an intermediate hop, as in \name.

Using measured latencies allows us to avoid relying on simulated or approximate data.
While simulations can be a useful tool, prior work~\cite{schlinker2019peering} has shown that routing protocols are best evaluated using live internet paths rather than through a simulation with, necessarily, reduced scale and complexity.

We evaluate the performance of \name in terms of its direct impact on the latency between pairs of Tor relays as well as its ability to reduce the latency of Tor circuits.
Evaluating on circuits as well as pairs allows us to account for the relative popularity of relays and more closely model the expected reduction in latency \name can provide to Tor's end users.

\subsection{Measurement Methodology}
\label{sec:methodology}

\subsubsection{Ting}
\label{sec:ting}
For our measurements, we adapt the Ting method of \citet{cangialosi2015ting} for estimating latencies between Tor relays.
Ting creates a set of three circuits involving \emph{observers}, which are Tor relays run solely for the purpose of obtaining latency measurements.
Specifically, to obtain the latency between two Tor relays, $\relay_A$ and $\relay_B$, we run two observer relays $\obs_1$ and $\obs_2$ along with a measurement client on the same physical machine.
Once each circuit is established, the measurement client ``pings'' itself through the circuit to estimate round-trip latencies for the following circuits:
\begin{enumerate}
\item {\small $\rtt_{AB} = \textrm{RTT}\left(\obs_1 \to \relay_A \to \relay_B\to \obs_2\right)$}
\item {\small $\rtt_{A} = \textrm{RTT}\left(\obs_1 \to \relay_A\to \obs_2\right)$}
\item {\small $\rtt_{B} = \textrm{RTT}\left(\obs_1 \to \relay_B \to \obs_2\right)$}
\end{enumerate}
With these, we estimate the round-trip time between $\relay_A$ and $\relay_B$ as \(\rtt_{AB} - \frac{1}{2} (\rtt_A + \rtt_B)\).
This approximates the RTT including the forwarding delay at both $\relay_A$ and $\relay_B$ as forwarding is inherently a component of the latency experienced by via traffic.
We repeat this process in order to find the minimum observed latency between $\relay_A$ and $\relay_B$: in our observations, after \num{10} iterations, \SI{95.5}{\percent} of circuits are within \SI{5}{\percent} or \SI{5}{\milli\second} of the minimum observed in \num{100} samples. 
\subsubsection{Directional Latencies}
\label{sec:ting-mod}
The Ting protocol does not account for \textit{directional} latencies where the outgoing latency between two nodes may not be equivalent to that of the return trip.
Specifically, the method for computing $\rtt_{AB}$ described above assumes that $\textrm{latency}\left(\obs_1 \to \relay_{X}\right) \approx \textrm{latency}\left(\relay_X\to \obs_2 \right)$. 
To detect asymmetry in our RTT measurements we include a timestamp in our measurements halfway through the round-trip ``ping'' (all timestamps are with respect to the same clock).
In our dataset (\cref{sec:lat-data}), the median asymmetry was \SI{2.4}{\percent} and only \SI{0.2}{\percent} of measurements had an asymmetry of $2 \times$ or greater.
Importantly, asymmetric RTTs impact \textbf{only} our evaluation as, when deployed, \name{} naturally accounts for directional latencies using data races (\cref{sec:data-race}).

\subsubsection{Infrastructure}
\label{sec:infrastructure}
To collect latency measurements at scale, we adapted the Ting protocol to support parallel measurements across multiple machines.
Our larger scale also required changes to respect a safe maximum load on the Tor network (see \cref{sec:ethics}): we impose a global maximum limit on concurrent measurements and spread measurements of individual relays across time.
Our infrastructure also compensates for the high churn in the Tor network (\SI{13}{\percent} of relays we observed were online less than half the time) by enqueueing measurement jobs based on the currently online relays, with automated retries. 
We handled hardware and power failures using a fault-tolerant system design: we separated data persistence, measurement planning, and the measurements themselves.

We deployed to a private OpenStack~\cite{sefraoui2012openstack} cloud, but provide a Terraform~\cite{terraform} template supporting any provider.
Our open-source~\cite{source-code-for-shortor} measurement infrastructure is approximately \num{3300} lines of code, consisting primarily of Python and shell scripts.

\subsubsection{Geographic Location of Relays}
\label{sec:location}
We obtain country codes for the relays in our dataset using the GeoIP database \cite{geoip}, which is also used by Tor in practice.
However, GeoIP locations are not guaranteed to be \SI{100}{\percent} accurate~\cite{geo-ip-bad-1,geo-ip-bad-2,geo-ip-challenges-tor}.
Indeed, upon careful inspection, we observed a handful of relay pairs with \emph{physically impossible} RTTs for their purported locations. 
All of these pairs involved the same twelve relays allegedly located in the US.
We determined that these twelve relays have \textit{higher} average RTTs inside the Americas than they do to relays located in other regions.
Because of this, all location-related figures (\Cref{fig:rtt-dist}(b), \Cref{fig:mator}, \Cref{fig:network-share}) exclude these relays as, while we are confident that their reported location is incorrect, we cannot accurately determine their true location. 

\subsubsection{Ethics and Safety}
\label{sec:ethics}
We designed our measurement process to minimize impact on Tor users and relay operators and to comply with security best practices for Tor.
To this end, we submitted a proposal to the Tor Research Safety Board~\cite{tor-research-safety-board} for review prior to measurement and adhered to their recommendations.
We also received an IRB exemption from each author's institution for this work.

Collecting our data required us to run several live Tor relays.
These relays recorded \textbf{only} our measurement traffic---at no point did we observe or record any information about any traffic from Tor users.
We also minimized the likelihood of a user choosing our relays for their circuits by advertising the minimum allowed bandwidth of \SI{80}{\kibi\byte\per\second}~\cite{tor-spec}.

Our measurement collection was spread over 42 days to reduce concurrent load, including a limit on simultaneous measurements (detailed in \Cref{sec:infrastructure}).
We also notified a Tor relay operator mailing list and allowed operators to opt out of our measurements; we excluded four such relays.

In light of recent work by \citet{Schnitzler2021WeBuilt} on the security implications of fine-grained latency measurements for Tor circuits, we have not published our full latency dataset.
However, we will share this data with researchers upon request and are in communication with our reviewers from the Tor Research Safety Board about safely releasing it in the future.

\subsubsection{Latency Dataset}
\label{sec:lat-data}
In this work, we directly measure pairwise latencies within Tor rather than relying on outside estimates.
We focus our measurements on the \num{1000} most popular Tor relays (by consensus weight) for two main reasons:
\begin{enumerate}
    \item 
        Measuring all \num{36325026} possible pairs of the \num{8524} Tor relays we observed was intractable for this work.
    \item 
        These popular relays are present on over \SI{75}{\percent} of circuits~\cite{greubel2020quantifying} and thus can provide disproportionate utility.
\end{enumerate}

\noindent
Our dataset contains \num{406074} measured latencies or \SI{81.3}{\percent} of all pairs of the \num{1000} most popular relays.\footnote{Missing measurements are largely due to churn in the Tor network causing relay pairs to not be live simultaneously.}

\subsection{Applying \name to Relay Pairs}
\label{sec:pairs-perf}

We begin evaluating \name by comparing the potential latency between our set of relay pairs when routing via \name to our measured latencies observed using Tor's default routing. 
\Cref{fig:rtt-dist}(a) shows the relative frequency of RTTs experienced by pairs of Tor relays using \name and when routing normally while \Cref{fig:rtt-dist}(b) focuses on the relationship between default RTT and \name RTT for each relay pair. 
Using \name, all of Tor's high tail latencies were resolved: \name sees a maximum absolute RTT of \SI{157}{\milli\second}, while \SI{0.09}{\percent} of pairs in Tor had RTTs of over half a second.
In other words, the 99.9\textsuperscript{th} percentile of relay pairs see a reduction in RTT from \SI{487}{\milli\second} in Tor to \SI{125}{\milli\second} in \name.
Additionally, \SI{25.4}{\percent} of relay pairs cut their RTT in half (or more) using \name.

\Cref{fig:rtt-dist}(b) also shows that \name's RTT values largely correspond the physical distance between the endpoints: relay pairs that are across an ocean necessarily experience a higher latency than those in the same region.

\subsection{\name Circuits}
\label{sec:perf}
We model the expected reduction in latency for end users by applying \name to Tor circuits.
Due to Tor's non-uniform relay selection probabilities, our pairwise latency dataset does not directly account for how \textit{probable} any of the observed RTTs are.
As such, we include an evaluation of \name on two million Tor circuits built according to Tor's default parameters. 
Because Tor averages 2M daily users, this roughly approximates the expected distribution of circuits over one day of use.

\begin{figure}[t]
    \centering
    \includegraphics[width=\linewidth]{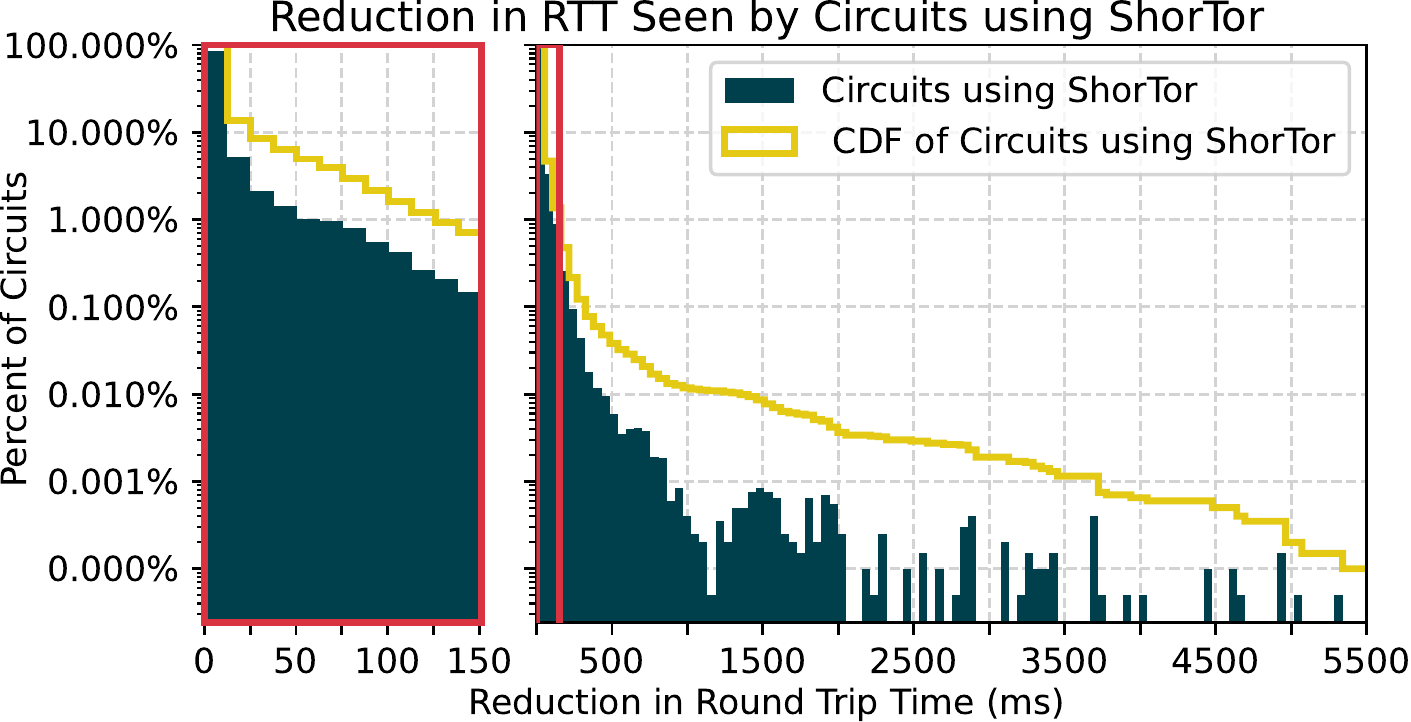}
    \caption{%
      Reduction in Round Trip Time experienced by 2M Tor circuits when routing with \name. We show an expanded view of the first \SI{150}{\milli\second} on the left.
      The CDF line is reversed: for instance, \SI{0.01}{\percent} of circuits see a speedup of at least \SI{1000}{\milli\second}.
    }
    \label{fig:rtt-dist-circ}
\end{figure}

\subsubsection{Circuit Selection}
\label{sec:circ-select}
Using the Tor path selection simulator (TorPS)~\cite{johnson2013users}, we choose two million circuits over \emph{all} \num{8524} relays we observed.
Because we collect latency measurements from only the \num{1000} largest Tor relays by consensus weight, many of these circuits have incomplete latency data.
We select circuits from the full set of relays, despite incomplete latency data, to ensure that our distribution of circuits closely resembles that of real Tor users.
We handle the gaps in our data by reporting on the \textit{reduction} in latency provided by \name rather than absolute RTTs.
All circuit legs with missing measurements are reported as a speedup of \SI{0}{\milli\second} (equivalent to baseline Tor).
The speedups we observe can thus be thought of as the \emph{minimum} that our set of simulated circuits would experience using \name.

\subsubsection{Latency of \name Circuits}
\label{sec:shortor-perf-comp}
In this section, we evaluate the performance of \name on our set of 2M circuits.
We only indicate a speedup for a circuit if: (1) it contains two adjacent relays that are present in both our measurement dataset and in the set of relays supporting \name and (2) some other relay that supports \name can provide a faster route between the circuit relays.
\SI{68.0}{\percent} of circuits have no available measurement data for either leg and are shown with the default speedup of \SI{0}{\milli\second}. 
Of the \SI{32.0}{\percent} of circuits with a latency measurement for at least one leg, \SI{83.7}{\percent} see a speedup with \name. 

As shown in \Cref{fig:rtt-dist-circ}, \SI{1}{\percent} of the 2M circuits see a latency improvement of \SI{122}{\milli\second} or greater and \SI{0.012}{\percent} of circuits saw a speedup of over a second.
For details on the relationship between RTTs and the page load times experienced by users, see \cref{sec:ux}

\subsubsection{Incremental Deployment}
\label{sec:eval-incremental}
As previously described in \Cref{sec:incremental}, \name is designed to function at relatively low levels of deployment. 
Our previous evaluation (\Cref{fig:rtt-dist-circ}) assumed that all \num{1000} of the relays we measured supported \name. 
In \Cref{fig:incremental-deployment}, we show that \name is also capable of reducing latency for Tor circuits even at substantially lower levels of deployment. 
As before, we only apply \name when all relays involved support the protocol and assume that all unmeasured pairs of relays have no speedup.
We find that circuits at the 99.9\textsuperscript{th} percentile see latency reductions of \SI{178}{\milli\second} even when only the top 500 relays support \name. 

\begin{figure}[t]
     \centering
     \includegraphics[width=\linewidth]{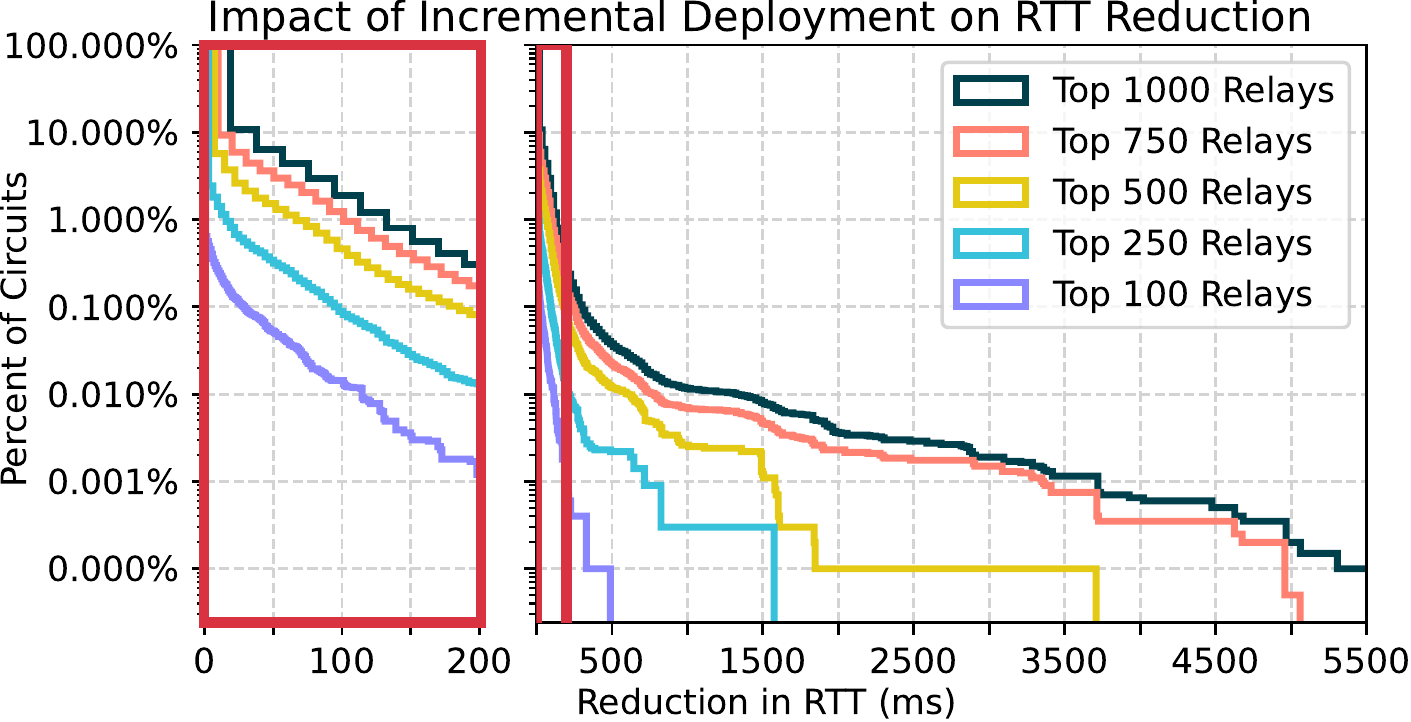}
     \caption{Reversed CDF for incremental deployment speedups.}
     \label{fig:incremental-deployment}
 \end{figure}

\subsection{Cost of \name}
\label{sec:shortor-cost}
\name's primary cost, aside from the one-time startup cost of modifying the Tor protocol, is in terms of bandwidth overhead from its longer paths.
In the steady state, Tor circuits will use extra bandwidth for each hop using a via relay: an overhead of 1/3 above the original traffic.
If we do this for \emph{every} hop with \emph{any} speedup, no matter how small, this uses \num{10.9}\% extra bandwidth over Tor right now. 
However, if we only use \name when it offers a speedup above a certain cutoff, this overhead quickly declines:

\begin{center}
\begin{tabular}{@{}lrrrrr@{}}
\toprule
\bf Cutoff & \bf \num{0} ms & \bf \num{10} ms & \bf \num{25} ms & \bf \num{50} ms & \bf \num{100} ms \\ 
Overhead & \num{10.9}\% & \num{6.6}\% & \num{3.8}\% & \num{2.3}\% & \num{0.8} \\
\bottomrule
\end{tabular}
\end{center}
Further, relays will only carry via traffic when they have excess capacity.
Tor reports consistently under 50\% bandwidth utilization across the network \cite{tor-metrics-network-size}.

A more minor source of bandwidth overhead is control information around routing tables and data races.
First, relays must keep their latency tables up to date following $\algname{Latencies.Update}()$ (\cref{proto:latency-measurements} in \cref{sec:pairwise-latency-alg}).
This requires each relay send its latency table to every other relay once per day.
Using 16 bits per RTT estimate, each relay must send about \SI{100}{\mega\byte} per day\footnote{Assuming 7000 concurrently active relays in the Tor network: 16 bits per RTT $\times$ 7000 relays per table $\times$ 7000 relays sending their table each day.} in total---about \num{0.05}\% of the minimum recommended relay bandwidth~\cite{tor-relay-requirements}.
Second, they must establish via connections using $\algname{Race.Run}()$ (\cref{proto:data-race} in \cref{sec:data-race}), which sends a small, configurable number of extra packets (e.g., 5), equivalent to about \SI{2.5}{\kilo\byte}\footnote{Tor cells are $\approx$ \SI{0.5}{\kilo\byte} each.} of extra data.
Data races have tunable frequency and will only occur if latency estimates indicate a speedup above the cutoff.
Assuming two million circuits per day with each circuit participating in a data race at both hops, data races will consume \SI{10}{\giga\byte} over the course of the day across the entire network. 
Currently, Tor advertises bandwidth of \SI{600}{\giga\byte}/s and consumes less than \SI{300}{\giga\byte}/s~\cite{tor-metrics}.

Thus, \name's bandwidth overhead is dominated by the longer path lengths and is parameterizable based on cutoffs for latency reduction as shown above.
This overhead may \emph{not} be distributed evenly among relays, but we note that participation in \name is fully optional (see discussion of incremental deployment in \Cref{sec:incremental}), so resource constrained relays may simply choose not to participate at any stage of \name or decide not to support the protocol entirely if overhead is a concern.
 
\subsection{Impact of \name on User Experience}
\label{sec:ux}
Perceived latency in the form of page load times (PLT) has a demonstrable impact on users in anonymity systems.
In qualitative user experience research, Tor users specifically cite latency as an issue keeping them from adopting Tor \cite{gallagher2017newme,gallagher2018peeling,gallagher2020measurement}.
\citet{kopsell2006low} finds a linear relationship between latency and number of users for an anonymous communication system.

In many cases, latency (not bandwidth) is the limiting factor for page loads: increases in RTTs cause linear increases to PLT, often with a \num{10}$\times$ multiplier.
\citet{netravali2018vesper} find that increasing RTT from \SI{25}{\milli\second} to \SI{50}{\milli\second} increases \num{95}th-percentile PLT across \num{350} popular sites from \SI{1.5}{\second} to \SI{3.4}{\second}, and increasing to \SI{100}{\milli\second} raises the PLT to \SI{6.1}{\second}.
Many factors contribute to this multiplier, including TCP congestion control, TLS handshakes, and complex web sites where an initial page fetch may spawn additional requests \cite{netravali2018vesper,rajiullah2019web,rajiullah2015towards}.

To bridge the gap between our RTT-based evaluation of \name and the more intuitive usability metric of PLTs, we simulate the impact of network delays on page load times over Tor, finding that small increases in delays lead to large increases in page load times.
First, we measure the time it takes to load the New York Times and Google homepages over ten Tor circuits, chosen by the Tor path selection algorithm, without modification or delay.
We then model changes in RTT, such as those from a link delay, by using the Linux \texttt{tc} utility to introduce an artificial delay for each packet sent over the same ten circuits.

To evaluate \name, we selected delays that correspond to speedups seen by \name circuits in \Cref{sec:perf} to obtain an estimate for the potential difference in page load times experienced by end users.
Of the circuits in our measurement dataset, \SI{5.04}{\percent} experience a speedup of at least \SI{50}{\milli\second}, \SI{1.66}{\percent} of at least \SI{100}{\milli\second}, and \SI{0.04}{\percent} of at least \SI{500}{\milli\second}.

We report the median change in PLT for fetching \texttt{google.com} and \texttt{nytimes.com} over these ten circuits when traffic is delayed by \SI{50}{\milli\second}, \SI{100}{\milli\second}, and \SI{500}{\milli\second}:

\medskip
\begin{tabular}{@{}llccc@{}}
              & &  \multicolumn{3}{c}{\bf Network Delay}   \\
              \cmidrule(lr){3-5}
{\bf Website} & & {\bf 50 ms}  & {\bf 100 ms}  & {\bf 500 ms} \\
\midrule
google.com & $\Delta$PLT:    & \cellcolor{red!5} 0.98 s & \cellcolor{red!10} 1.96 s & \cellcolor{red!20} 10.40 s\\
nytimes.com & $\Delta$PLT:    & \cellcolor{red!5} 1.66 s & \cellcolor{red!10} 2.34 s & \cellcolor{red!20} 15.80 s\\
\bottomrule
\end{tabular}
\medskip

We find that Tor follows the trend seen in prior work with even \SI{50}{\milli\second} changes in RTT increasing PLTs by approximately a second.
In the context of \name, \SI{1}{\percent}, or 20k, of the 2M circuits we evaluated saw a reduction in their RTT of at least \SI{120}{\milli\second} which corresponds to an expected two second drop in PLT. 
As Tor sees approximately two million daily users, each building at least one circuit, \name's impact on tail latencies is likely to improve the experience of tens of thousands of Tor users daily.

\section{Security Analysis}
\label{sec:secanalysis}
In this section, we analyze the security of \name. 
We consider how \name's use of via relays might impact an adversary's ability to deanonymize Tor traffic in practice. 
To do so, we examine the change in the adversary's view of the Tor network when using \name as compared to the baseline Tor protocol. 
While via relays never observe the sender or recipient of Tor traffic \emph{directly}, they are able to observe traffic streams and other relays on the circuit, which could \emph{indirectly} deanonymize the sender or recipient.
For this purpose, we use the AnoA~\cite{anoa,anoa-mators-thesis} framework for analyzing the anonymity guarantees of anonymous communication protocols to help us determine the potential anonymity impact of vias in \name. 

\subsection{AnoA Analysis}
\label{sec:anoa-and-mator}
\citet{mator,backes2014nothing} apply the AnoA framework to analyze the anonymity of the Tor network and the impact proposed protocol modifications might have on Tor's anonymity. 
AnoA uses ideas from differential privacy~\cite{dwork2008differential} to determine an adversary's advantage in a challenge-response game, which models the ability to distinguish between traffic streams.
In this game, the adversary statically corrupts a set of ``traffic observation points'' (i.e., Tor relays) and attempts to distinguish between two possible scenarios involving different senders and recipients for each traffic stream. 
The adversary's ability to distinguish between these scenarios models the overall anonymity of the Tor network.

\begin{definition}[Anonymity Notions~\cite{anoa}; simplified]
\label{def:anon-notions}
Let $\adv$ be a passive network adversary consisting of a set of corrupted relays and capable of observing a subset of network traffic through these relays. 
The anonymity notions are:
\begin{description}
    \item[Sender anonymity:] the probability that $\adv$ can distinguish between two potential senders of a given traffic stream. 
    
    \item[Recipient anonymity:]
    the probability that $\adv$ can distinguish between two potential recipients of a given traffic stream. 
    
    \item[Relationship anonymity:] 
    the probability that $\adv$ is capable of determining which sender is communicating with which recipient. 
    The anonymity game is defined for all pairs of senders (Alice and Bob) and recipients (Charlie and Diana); $\adv$ wins by successfully linking a traffic stream to the correct communicating pair. 
\end{description}
\end{definition}

Tor relays are not all created equal: stable, high bandwidth relays see a larger fraction of Tor's traffic, but are also more costly.
To accurately analyze an adversary's impact on anonymity, it is necessary to decide \emph{which} subset of relays are most beneficial to corrupt.
\citet{mator,backes2014nothing} develop \textsc{MaTor}~\cite{mator-tool} to model different adversarial corruption strategies.
Following \citet{mator,backes2014nothing}, we consider four adversarial corruption strategies: $k$-collusion, bandwidth, monetary, and geographic.
With the $k$-collusion strategy, the adversary corrupts $k$ relays that provide it with the most advantageous view of Tor's network.
With the other three strategies, the adversary has a similarly fixed  ``budget'' (e.g., cumulative bandwidth) which constrains the optimal set of relays to corrupt. 
The \textsc{MaTor} tool~\cite{mator-tool} optimizes each adversarial strategy based on the allocated budget and anonymity notion to empirically compute the worst-case anonymity bound under AnoA.

\subsection{Differential Advantage}
\label{sec:differential-advantage}
We define how we measure the theoretical impact of \name on anonymity in Tor in terms of the difference in an adversary's advantage in \name vs. baseline Tor.

\paragraph{Notation}
We write $\nodeuni$ for the set of all Tor relays and $\circuituni$ for the set of all Tor circuits (consisting of three independent relays).
We denote the path selection and via selection algorithms by $\mathsf{PS}$, and $\mathsf{VS}$, respectively.
We note that $\mathsf{VS}$ is unique to \name and is separate from the path selection algorithm $\mathsf{PS}$ used in Tor.
We use $\bot$ for a ``null'' element. 

\begin{definition}[Via Relay]
Let $\circuit \in \nodeuni\times\nodeuni\times\nodeuni$ be a Tor circuit consisting of three circuit relays. 
A \emph{via relay} $\vianode \in \nodeuni$ is a Tor relay routing packets between a pair of consecutive circuit relays in $\circuit$.  
\end{definition}

\begin{remark}
\label{remark:via-and-wire-equiv}
A via relay is semantically equivalent to a \emph{wire} connecting two consecutive Tor relays in the circuit. 	
Via relays only forward traffic and are \emph{not} involved in circuit establishment or any of Tor's onion-encryption operations.
\end{remark}

We first define \emph{adversary observations} on the network.
We then use this to define the \emph{differential advantage}---the impact that \name introduces relative to baseline Tor.

\begin{definition}[Adversary Observations]
\label{def:obs-points}
Let $\viauni \subseteq \nodeuni$ be the set of candidate via relays and let $\circuituni$ be the set of all three-relay circuits. 
Fix a set $\nodeuni^* \subseteq \nodeuni$  of adversary-corrupted relays.
We define the function:
$\obs_{\nodeuni^*}: \circuituni \times \viauni \cup  \set{\bot}\times \viauni \cup \set{\bot} \to \obset$,
which takes as input a circuit and a pair of via relays (possibly $\bot$), and outputs the set of observation points (adversary-corrupted circuit and via relays). 

\end{definition}
\Cref{def:obs-points} captures the ``view'' of the adversary for a given circuit. 
For example, an adversary corrupting the middle relay on a single circuit sees the guard and exit relays on the circuit, but not the sender or recipient.

\begin{definition}[Differential Advantage]
\label{def:diff-adv}
Let $\viauni \subseteq \nodeuni$ be the set of candidate via relays.
Let $\mathsf{PS}$ be a randomized path selection algorithm and $\mathsf{VS}$ be a via relay selection algorithm for a circuit.  
Fix $\nodeuni^* \subseteq \nodeuni$ to be the set of adversary-corrupted relays and let $\circuituni$ be a set of all three-relay circuits output by $\mathsf{PS}$. 
For a circuit $\circuit \in \circuituni$ and $(\vianode_1, \vianode_2) \in \mathsf{supp}_{\circuit \in \circuituni}|\mathsf{VS}(\circuit)|$, the adversary is said to have a \emph{differential advantage} when for 
\begin{align*}
    \obset_{\mathrm{tor}} \gets \obs_{\nodeuni^*}(\circuit, \bot, \bot) \text{ and } \obset_{\mathrm{via}} \gets \obs_{\nodeuni^*}(\circuit, \vianode_1, \vianode_2),
\end{align*}
the set $\obset_{\mathrm{tor}} \subset \obset_{\mathrm{via}}$,
where $\obs$ is as defined in \Cref{def:obs-points}.  
\end{definition}

\noindent
In words, an adversary has a differential advantage in \name when \emph{new} observations are gained as a result of introducing via relays.
We now examine scenarios in which the adversary \emph{does} gain differential advantage by corrupting via relays. 
We formalize these scenarios in \Cref{thm:advantagious-scenario}.

\begin{lemma}
\label{thm:advantagious-scenario}
Let $\viauni \subseteq \nodeuni$ be the set of candidate via relays.
Fix $\nodeuni^* \subseteq \nodeuni$, the set of adversary-corrupted relays. 
For all sets of observations $\obset_{\mathrm{tor}}$ and $\obset_{\mathrm{via}}$ for a circuit $\circuit \in \circuituni$, as in~\cref{def:diff-adv},  $\obset_{\mathrm{tor}} \subset \obset_{\mathrm{via}}$ if and only if there exists at least one via relay in $\nodeuni^*$ between two consecutive non-corrupted relays in $\circuit$. 
\end{lemma}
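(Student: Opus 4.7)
My plan is to prove the biconditional directly by unpacking the definition of $\obs_{\nodeuni^*}$ and invoking \cref{remark:via-and-wire-equiv}, which tells us that a via relay is semantically a wire between two consecutive circuit relays. The key observation is that $\obset_{\mathrm{tor}} = \obs_{\nodeuni^*}(\circuit, \bot, \bot)$ contains only adversary-corrupted circuit relays from $\circuit$, whereas $\obset_{\mathrm{via}} = \obs_{\nodeuni^*}(\circuit, \vianode_1, \vianode_2)$ may additionally include $\vianode_1$ and/or $\vianode_2$ when they lie in $\nodeuni^*$. Thus strict containment $\obset_{\mathrm{tor}} \subsetneq \obset_{\mathrm{via}}$ holds precisely when at least one via contributes a genuinely new observation point.

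For the forward direction ($\Rightarrow$), I would assume $\obset_{\mathrm{tor}} \subsetneq \obset_{\mathrm{via}}$ and pick some $x \in \obset_{\mathrm{via}} \setminus \obset_{\mathrm{tor}}$. Because $\obs_{\nodeuni^*}$ only returns corrupted circuit or via relays, and every corrupted circuit relay already appears in $\obset_{\mathrm{tor}}$, the element $x$ must be one of $\vianode_1, \vianode_2$, and $x \in \nodeuni^*$. It remains to argue that the two circuit relays adjacent to $x$ are \emph{not} corrupted. Here I would invoke \cref{remark:via-and-wire-equiv}: since $x$ acts as a wire between two consecutive circuit relays $\node_i, \node_{i+1}$, if either $\node_i$ or $\node_{i+1}$ were in $\nodeuni^*$, then the traffic segment observable via $x$ is already observable through that corrupted endpoint, so $x$ contributes no new observation relative to $\obset_{\mathrm{tor}}$ and we contradict strict containment.

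For the backward direction ($\Leftarrow$), I would assume a via $\vianode \in \{\vianode_1, \vianode_2\} \cap \nodeuni^*$ sits between two consecutive circuit relays $\node_i, \node_{i+1} \notin \nodeuni^*$. Then by \cref{def:obs-points}, $\vianode \in \obset_{\mathrm{via}}$, while $\obset_{\mathrm{tor}}$ contains neither $\vianode$ (since vias are absent when both via arguments are $\bot$) nor $\node_i, \node_{i+1}$ (by assumption), so $\vianode$ is a fresh observation point, giving $\obset_{\mathrm{tor}} \subsetneq \obset_{\mathrm{via}}$. The inclusion $\obset_{\mathrm{tor}} \subseteq \obset_{\mathrm{via}}$ itself is immediate from the monotonicity of $\obs_{\nodeuni^*}$ with respect to adding vias.

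The main obstacle I anticipate is making the forward direction watertight, since \cref{def:obs-points} as stated appears to return the set of corrupted relays without explicitly deduplicating observations that cover the same traffic segment. I would need to argue carefully --- presumably via \cref{remark:via-and-wire-equiv} --- that a corrupted via adjacent to an already-corrupted circuit relay does not enlarge the adversary's effective observation set in the sense used by the surrounding AnoA analysis, so that ``$\subset$'' in the lemma should be read as strict and semantic rather than purely set-theoretic. Clarifying this convention up front should let the rest of the argument go through cleanly.
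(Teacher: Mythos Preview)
Your proposal is correct and follows essentially the same approach as the paper: both arguments hinge on \cref{remark:via-and-wire-equiv} to identify the via's observation with that of the wire, then note that a corrupted endpoint already sees that wire, so a corrupted via yields a strictly larger observation set only when both adjacent circuit relays are honest. Your version is more carefully structured as an explicit biconditional and, commendably, flags the semantic-versus-set-theoretic reading of ``$\subset$'' that the paper leaves implicit; the paper's own proof is terser but makes exactly the same move.
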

\begin{proof}
	Let $\node_{a},\node_{b} \in \circuit$ be any two consecutive circuit relays in $\circuit$ (either $\{\mathrm{guard},\mathrm{middle}\}$ or $\{\mathrm{middle},\mathrm{exit}\}$) with via relay $\vianode$ connecting $\node_{a}$ and $\node_{b}$. 
	Corrupting either $\node_a$ or $\node_b$ provides the adversary with a view of the wire, which is equivalent to the view obtained from corrupting the via (see \Cref{remark:via-and-wire-equiv}). 
	For any circuit $\circuit \in \circuituni$, the set of observation points gained from corrupting $\vianode$ is a strict subset of the set of observation points gained from corrupting either $\node_{a}$ or $\node_{b}$ individually. 
	Therefore, we have that the adversary only obtains an additional observation ($\obset_{\mathrm{tor}} \subset \obset_{\mathrm{via}}$) if $\node_{a}$ and $\node_{b}$ are \emph{not} corrupted while the via relay $\vianode$ \emph{is} corrupted.
\end{proof}

\begin{claim}
\label{cor:middle-node-equiv}
An adversary-corrupted via relay observes strictly less than an adversary-corrupted circuit middle relay in Tor. 
\end{claim}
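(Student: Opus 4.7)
The plan is to prove \Cref{cor:middle-node-equiv} by directly comparing the observation sets of a corrupted via relay and a corrupted middle relay, showing strict containment via a reduction to \Cref{thm:advantagious-scenario} and \Cref{remark:via-and-wire-equiv}. The core intuition is that a via relay sits on an encrypted ``wire'' between two adjacent circuit relays, so it inherits only a subset of the view that either of those circuit relays already has; since one of those adjacent relays is always the middle relay itself, the via's observation is dominated by the middle's.

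First, I will fix notation: let $\circuit = (\node_g, \node_m, \node_e) \in \circuituni$ be an arbitrary Tor circuit, let $\vianode \in \viauni$ be a via relay, and consider the adversarial corruption sets $\nodeuni^*_{\mathrm{via}} = \{\vianode\}$ and $\nodeuni^*_{\mathrm{middle}} = \{\node_m\}$. By the \name protocol (\cref{sec:shortor-proto-stages}) and the ``avoiding traffic loops'' constraint, every via on a circuit lies between two consecutive circuit relays; hence $\vianode$ sits either between $\node_g$ and $\node_m$ or between $\node_m$ and $\node_e$. I will then enumerate the observations each corruption yields: by \Cref{remark:via-and-wire-equiv}, corrupting $\vianode$ gives exactly the view of a wire between its two neighboring circuit relays, namely the identities of those two neighbors, the ciphertexts of cells traversing that wire, and the cells' timing/volume. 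Corrupting $\node_m$, by contrast, gives the identities of $\node_g$ and $\node_e$ (both of $\node_m$'s circuit neighbors), the ability to strip one onion layer, and the full timing/volume profile of every cell on the circuit in both directions.

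Next, I will show $\obs_{\nodeuni^*_{\mathrm{via}}}(\circuit, \vianode, \bot) \subsetneq \obs_{\nodeuni^*_{\mathrm{middle}}}(\circuit, \vianode, \bot)$ (analogously for the second via slot) by a case analysis on which hop $\vianode$ occupies. In either case, one of $\vianode$'s two visible circuit-relay neighbors is $\node_m$ itself, and the other is either $\node_g$ or $\node_e$; both identities are already visible to a corrupted $\node_m$. Moreover, because via traffic between $\node_g$ and $\node_m$ (resp.\ $\node_m$ and $\node_e$) is ultimately delivered to $\node_m$, the cell stream observed by $\vianode$ is a strict subsequence of the cell stream observed by $\node_m$; the timing seen at $\vianode$ is the timing at $\node_m$ shifted by a forwarding delay on the wire, and so is recoverable by $\node_m$ up to that constant. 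This gives containment. For strict containment I will note that $\node_m$ observes an additional circuit relay (whichever of $\node_g, \node_e$ is not adjacent to $\vianode$), together with the stream on the other circuit hop that never touches $\vianode$, neither of which is in $\obset_{\mathrm{via}}$.

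The main obstacle is being precise about what ``observes'' means at the formalism level used in \Cref{def:obs-points}: the $\obs$ function is defined as returning a set of observation \emph{points}, but the intuitive claim ``strictly less information'' also involves the messages and timing visible at each point. I plan to reduce to set containment by invoking \Cref{remark:via-and-wire-equiv} to replace $\vianode$ with the wire between its two neighbors, at which point both sides become sets of circuit-relay endpoints and the comparison is purely combinatorial: $\{\node_g, \node_m\}$ or $\{\node_m, \node_e\}$ on the via side, versus $\{\node_g, \node_m, \node_e\}$ on the middle side. A brief remark will note that this argument also implicitly relies on the loop-avoidance rule from \cref{sec:shortor-proto-stages}, which guarantees $\vianode \notin \{\node_g,\node_m,\node_e\}$, so the via adds no self-loop observation that the middle would lack.
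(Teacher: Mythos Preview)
Your proposal is correct and follows essentially the same approach as the paper: both arguments hinge on the observation that any via necessarily sits adjacent to the middle relay (either on the guard--middle or middle--exit hop), so by \Cref{remark:via-and-wire-equiv} its wire-level view is subsumed by what the middle already sees, with strictness coming from the circuit relay on the other hop. Your treatment is considerably more detailed---you make the case split explicit, spell out the strict-containment witness, and flag the loop-avoidance rule and the ambiguity in \Cref{def:obs-points}---whereas the paper's proof is a three-sentence appeal to \Cref{thm:advantagious-scenario} and the positional fact; your extra care is warranted given how terse the paper is, but the underlying idea is identical.
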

\begin{proof}
By \cref{thm:advantagious-scenario}, we have that the adversarial advantage from corrupting a via relay is strictly less than corrupting any middle relay.
Via relays are positioned either between the guard and middle relays or middle and exit relays. 
As such, corrupting a middle relay in a circuit tightly upper bounds the observation points gained from corrupting both vias. 
\end{proof}

In \cref{prop:shortor-no-advantage}, we argue that \name does not advantage the adversary in any of the anonymity notions of \cref{def:anon-notions} (we empirically confirm this result in \cref{sec:quantifying-advantage}).

\begin{claim}
\label{prop:shortor-no-advantage}
\name applied to the baseline Tor network with path selection algorithm $\mathsf{PS}: \nodeuni \to \circuituni$ which outputs circuits independently of the sender and recipient (as is currently the case in Tor~\cite{tor-spec,tor-paper}), does not impact the anonymity notions of \Cref{def:anon-notions} of the AnoA framework.
\end{claim}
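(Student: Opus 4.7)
The plan is to show that, for any corruption set $\nodeuni^*$ and any of the three anonymity notions of \Cref{def:anon-notions}, the adversary's distinguishing advantage under \name is identical to its distinguishing advantage under baseline Tor. The core observation driving the argument is that both the path-selection algorithm $\mathsf{PS}$ (by hypothesis) and the via-selection algorithm $\mathsf{VS}$ (by construction of \name) depend only on relay identities and measured inter-relay latencies, never on the sender or recipient. Hence the joint distribution of $(\circuit, \vianode_1, \vianode_2)$ is independent of the challenge bit that selects the (sender, recipient) scenario in the AnoA game.

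First, I would instantiate the AnoA challenge experiment with \name in place and write the adversary's view as the observation set $\obset_{\mathrm{via}} = \obs_{\nodeuni^*}(\circuit, \vianode_1, \vianode_2)$ together with the per-cell transcripts (encrypted cells, timings) seen at each corrupted relay. The transcripts are generated by exactly the same sender/recipient behavior as in baseline Tor, so the only new information to analyze is whatever extra positional/identity knowledge the via corruptions contribute, i.e.\ the set $\obset_{\mathrm{via}} \setminus \obset_{\mathrm{tor}}$.

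Next, I would invoke \Cref{thm:advantagious-scenario} to isolate the only case in which this difference is non-empty: namely, when the two circuit relays $\node_a, \node_b$ adjacent to a corrupted via are both honest. By \Cref{cor:middle-node-equiv}, the information gained in that case is strictly less than what corrupting the corresponding middle relay in baseline Tor would yield. The identities $\node_a, \node_b$ revealed by the via are sampled by $\mathsf{PS}$ independently of (sender, recipient), and $\mathsf{VS}$ picks $\vianode$ deterministically from latency measurements between $\node_a$ and $\node_b$. I would therefore conclude that, conditioned on the baseline view $\obset_{\mathrm{tor}}$, the extra observation $\obset_{\mathrm{via}} \setminus \obset_{\mathrm{tor}}$ is independent of the challenge bit.

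The last step is a standard data-processing / coupling argument applied separately to each notion in \Cref{def:anon-notions}: the adversary's advantage is bounded by the statistical distance between the two conditional distributions of the full view under the two scenarios, and since the extra via component is conditionally independent of the challenge bit, that distance collapses to the baseline-Tor distance. The main obstacle I anticipate is making rigorous the claim that the traffic-analysis signal available at a corrupted via (cell timings, volumes) contributes no distinguishing power beyond what would already be available at $\node_a$ or $\node_b$; this requires noting that the via sits on a wire whose endpoints are precisely those two relays, so any timing/volume observation at the via is a deterministic function of observations already available to either endpoint, matching the intuition of \Cref{remark:via-and-wire-equiv}. Once that point is pinned down, combining \Cref{cor:middle-node-equiv} with the sender/recipient-independence of $\mathsf{PS}$ and $\mathsf{VS}$ closes the proof.
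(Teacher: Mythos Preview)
Your proposal is correct and follows essentially the same route as the paper: both arguments reduce the via case to the middle-relay case via \Cref{cor:middle-node-equiv}, then invoke the known AnoA/\textsc{MaTor} fact that a middle relay chosen independently of sender and recipient yields no distinguishing advantage. The paper's proof is simply the two-line version of what you spelled out.

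One remark: the ``main obstacle'' you flag about timing and volume signals at a corrupted via is not actually an obstacle for this claim. The AnoA notions in \Cref{def:anon-notions} are positional (which relays are observed and what they reveal about the circuit structure), and the paper explicitly treats single-vantage-point traffic analysis as outside the AnoA model, handling it separately via the network-share analysis. So you need not pin down that point to close the proof here; \Cref{remark:via-and-wire-equiv} and \Cref{cor:middle-node-equiv} already suffice within the framework.
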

\begin{proof}
Under the AnoA framework, corrupting the middle relay does not change the adversary's ability to deanonymize either sender, recipient, or relationship anonymity when the circuit is constructed \emph{independently} of the sender and recipient (see analysis of \citet{mator}).
This is because each middle relay is equally probable in all communication scenarios, giving the adversary no advantage~\cite{mator}.
As a consequence, by leveraging \cref{cor:middle-node-equiv}, corrupting one or both via relays when \name is applied to Tor does not advantage the adversary in the AnoA anonymity game of \cref{def:anon-notions}.  
\end{proof}

\Cref{prop:shortor-no-advantage} shows that \name does not impact anonymity of Tor.
However, when the middle relay is \emph{not} chosen independently of the sender or recipient (for example, when using location-aware path selection proposals; see \cref{sec:relatedwork}), then \name can exacerbate the negative impact on anonymity.

We quantify this advantage using \textsc{MaTor} by applying \name to LASTor~\cite{akhoondi2012lastor}, a location-biased path-selection proposal.
We emphasize that LASTor is \emph{not} integrated in Tor and has known security flaws~\cite{Wan2019guard}---we include it as an illustrative example of a location-aware path selection scheme.

\begin{figure}[t]
\centering
\subfloat{{\includegraphics[width=0.48\linewidth]{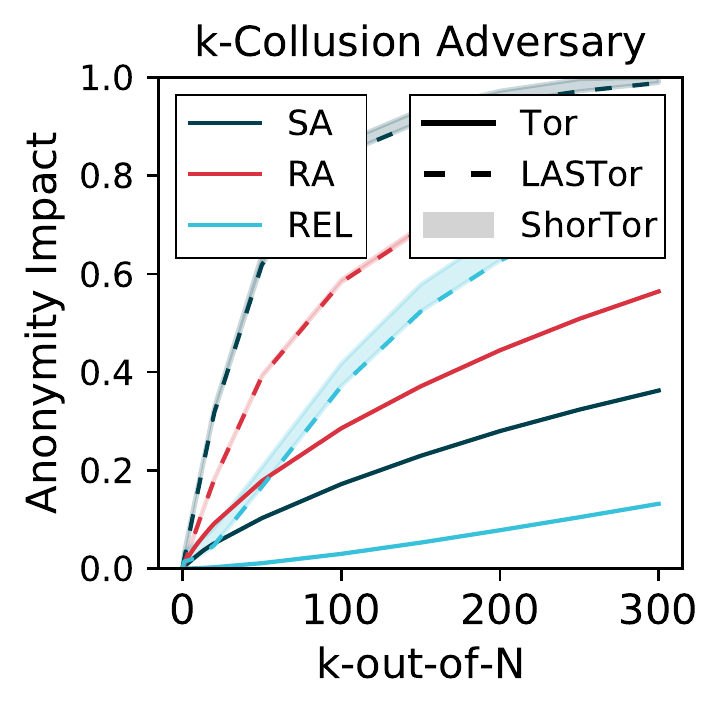}}}
\subfloat{{\includegraphics[width=0.48\linewidth]{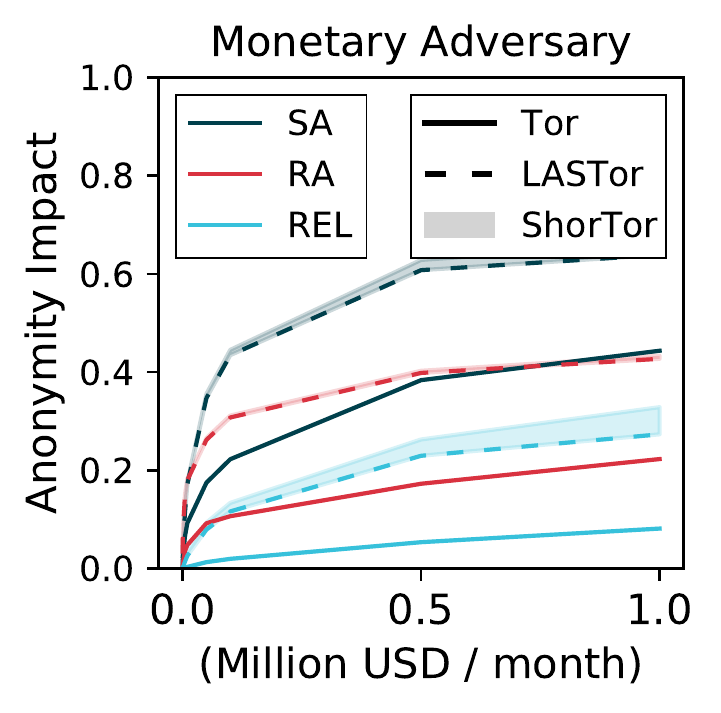}}}\\
\subfloat{{\includegraphics[width=0.48\linewidth]{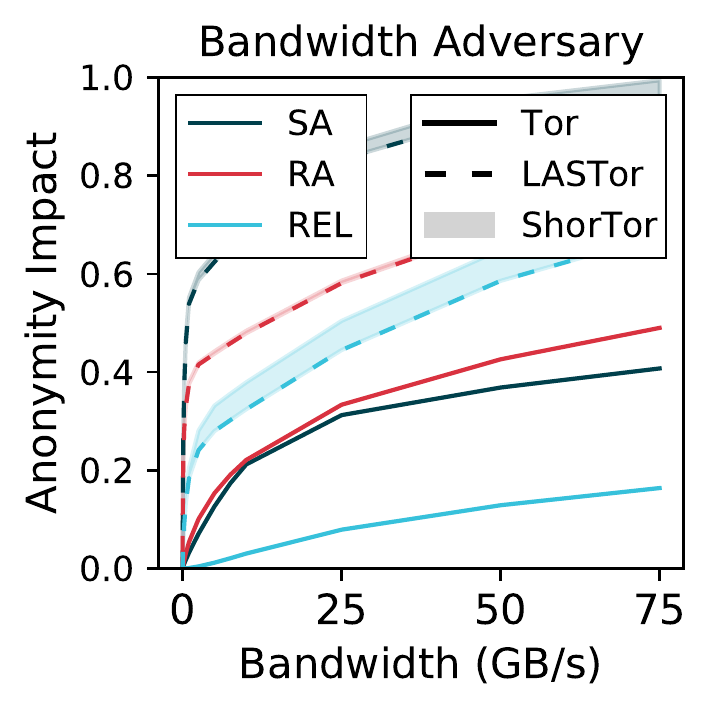}}}
\subfloat{{\includegraphics[width=0.48\linewidth]{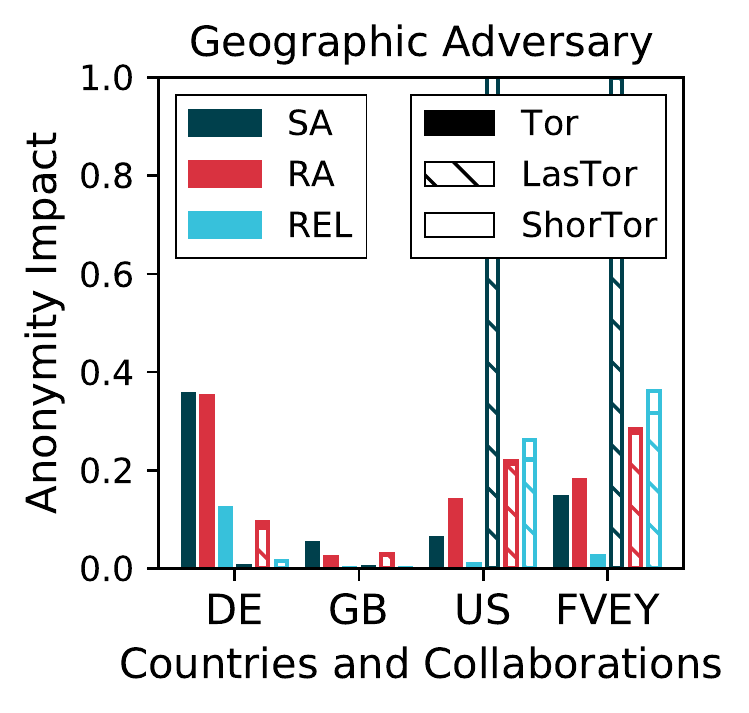}}}
\caption{
Anonymity impact of \name, compared with baseline Tor (client-independent) and LASTor (dependent on client location) path selection.
Each plot shows sender (SA), recipient (RA), and relationship (REL) anonymity (\Cref{def:anon-notions}) for a different adversarial relay corruption strategy. 
Shaded region represents the difference in the \textsc{MaTor}-computed anonymity bounds with and without \name. 
``FVEY'' refers to ``Five Eyes'' intelligence alliance member countries.
Under ANoA, \name affects all anonymity notions for LASTor, though not baseline Tor. Extended plots provided in \cref{sec:extended-mator-plots}.
}
\label{fig:mator}
\end{figure}

\subsection{Quantifying Anonymity of \name}
\label{sec:quantifying-advantage}
We now turn to empirically computing the \emph{worst-case} anonymity impact under the AnoA framework  when \name is applied to Tor and proposed modifications thereof. 
We modify \textsc{MaTor} (our code is open-source~\cite{source-code-for-shortor}; written in C++ and Python) to incorporate the use of via relays as described in \cref{sec:shortor-proto}. 
We report our quantitative results in \cref{fig:mator}. 

\paragraph{\name applied to Tor}
We confirm the results of \cref{prop:shortor-no-advantage} on the adversarial impact of \name used with baseline Tor:
the worst-case anonymity bounds computed by \textsc{MaTor} for baseline Tor and \name are equal, as relays are selected independently of both the sender and recipient.

\paragraph{\name Applied to LASTor}
We examine the impact of \name when combined with \emph{biased} path selection algorithms (e.g., path selection that takes client location into account). 
We use the LASTor~\cite{akhoondi2012lastor} proposal for this purpose. 
We find that \name applied to LASTor decreases anonymity under all three anonymity notions of \Cref{def:anon-notions}, as via relays offer additional observations points for the skewed distribution of guards and exits used by LASTor. 

\begin{figure}[t]
\centering
\includegraphics[width=\linewidth]{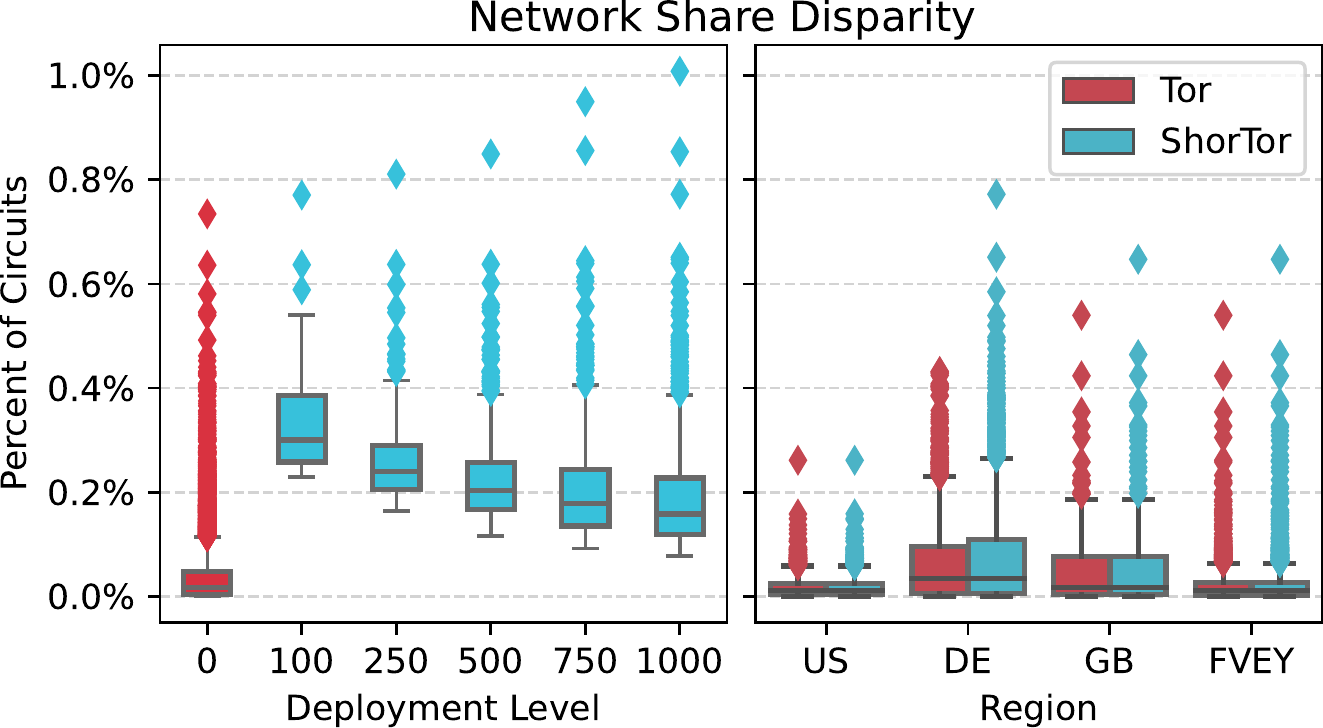}
\caption{Fraction of circuits seen by relays in Tor vs. \name. 
Left side considers incremental deployment (\cref{sec:eval-incremental}).
As more relays begin to support \name (x-axis), the average network share goes down, though some outliers see more traffic.
Right side looks at regional network share, or the fraction of circuits seen by relays in different regions.
}
\label{fig:network-share}
\end{figure}

\subsection{Network Share and Traffic Analysis Attacks}
\label{sec:network-share-analysis}
A limitation of AnoA is that it does not take into account traffic analysis attacks that can be conducted by a single point of observation~\cite{anoa-mators-thesis}, such as a via relay.
Most notably, this includes website fingerprinting attacks~\cite{jansen2018inside,fingerprinting-usenix,Bhat_2019,Rimmer_2018,sirinam2018deep}.

To analyze the adversarial advantage in orchestrating such attacks in \name, we consider the relative \emph{network share} disparity between baseline Tor and \name. 
By this we mean the relative fraction of circuits seen by a relay when acting as part of a circuit vs. as a via.
Regional network share is a concern for Tor users primarily due to varying policies on surveillance in different jurisdictions~\cite{johnson2017avoiding}.
Separately, determining network share of individual relays at different levels of deployment allows us to assess the potential security impact of incrementally deploying \name.

In \cref{fig:network-share}, we plot the network share using the same circuits as in \cref{sec:shortor-perf-comp}.
We vary the deployment level of \name to measure the expected change in network share as a function of relays supporting \name (\cref{sec:eval-incremental}).
We find that \name increases the median network share (as traffic traverses more nodes when using \name).
However, median network share decreases with larger deployments.
The worst-case network share increases from about \SI{0.4}{\percent} to about \SI{0.8}{\percent} for relays located in Germany (which is by far the country with the most Tor relays).
It is important to note that the overall network shares remain low, indicating a small disparity in expected network traffic observed.

Our analysis does not take into account \emph{adversarial} placement of relays with fast network connections to boost their via selection probability.
However, this is not unique to \name: relays in Tor already have a high disparity in their network share based on \emph{bandwidth} influencing their circuit selection probability. 
As such, we believe that the impact of \name on traffic analysis attacks is modest and in-line with Tor's existing assumptions about adversarial placement of relays.

\section{Related Work}
\label{sec:relatedwork}
In this section, we outline past works that focus on reducing latency in Tor though optimized routing decisions.
We note that all works here operate at the \textit{circuit} layer and are proposed modifications to Tor's circuit selection protocol.
There is additionally a large body of work that alters path selection in Tor for purposes of \emph{security}~\cite{yang2015enhancing,nithyan2015measuring,sun2017counter,edman2009awareness,rochet2016waterfiling,johnson2017avoiding,barton2016denasa,hanley2019dpselect,backes2014nothing}.
While important, these works are orthogonal to \name and often result in substantially degraded performance~\cite{rochet2020claps, mitseva2020security} without clear security advantages over Tor's current protocol~\cite{Wan2019guard,wacek2013empirical,geddes2013how}.

\paragraph{Traffic Splitting}
Rather than selecting a \textit{single} faster circuit, \citet{karaoglu2012multi} and \citet{alsabah2013path} split traffic across multiple circuits.
This distributes the load of the circuit across a larger number of relays, improving latency by reducing congestion on relays in the circuit.
Conflux~\cite{alsabah2013path} can achieve an average reduction in time-to-first-byte of \SI{23}{\percent} over baseline Tor.
Splitting traffic across multiple circuits solves an orthogonal problem to that addressed in \name and combining both protocols could be an interesting future direction.

\paragraph{Location-Aware Path Selection}
Alternative path selection proposals that reduce latency on Tor share a common theme: they all, either directly or indirectly, account for the location of the client or destination server when choosing a circuit~\cite{imani2019modified,wang2012congestion,annessi2016navigator,rochet2020claps,akhoondi2012lastor,sherr2009scalable,barton2018towards}.
This makes intuitive sense, as fast paths are likely to also be geographically short and, in particular, are unlikely to contain multiple ocean crossings.

\citet{imani2019modified} propose to improve performance of Tor circuits by having clients build multiple circuits, then preferentially select from those according to a series of strategies focusing on circuit length, RTT, and congestion.
In addition to latency, \citet{sherr2009scalable} include measurements of jitter and packet loss when selecting relays. 

\citet{wang2012congestion} opportunistically selects relays with low latency to construct circuits that avoid congested relays. 
NavigaTor~\cite{annessi2016navigator} applies a similar strategy to \citet{wang2012congestion} and demonstrates improved performance by using latency (specifically round-trip time) to discard slow Tor circuits.
PredicTor~\cite{barton2018towards} avoids the overhead of constructing then discarding multiple circuits by using a random forest classifier trained on Tor performance data to \textit{predict} the performance of a circuit prior to building it. 
CLAPS (CLient Aware Path Selection)~\cite{rochet2020claps} solves a weight-optimization problem to ensure a strict bound on anonymity degradation when selecting location-biased circuits. 
There additionally exists a large body of work on the security implications of location awareness in path selection as this can make Tor clients more identifiable by creating a correlation between their geographic location and the relays chosen for their circuit~\cite{rochet2020claps,mitseva2020security,mator,anoa-mators-thesis,anoa,geddes2013how,wacek2013empirical,Wan2019guard}.
As shown in \cref{sec:secanalysis}, \name does not share this problem and is able to reduce latency for Tor clients without the security pitfalls of location-aware circuit selection techniques.

\section{Discussion}
\label{sec:discussion}
\paragraph{Other sources of delay in Tor}
The type of overhead that \name addresses is not the largest source of delay in Tor. 
Limited congestion control~\cite{alsabah2011defenestrator,fiedler2020predictor}, under-optimized multiplexing of circuits on TCP connections~\cite{baysoni2019quick,baysoni2021quictor}, and high queuing delays~\cite{jansen2018kist,jansen2014never,jansen2017tor} are likely larger contributors to latency in Tor than suboptimal BGP routes.
Despite this, we believe \name to be of interest.
The source of delays addressed by \name and the techniques applied are both completely independent of other delays in Tor.
Because of this, the decrease in latency provided by \name will trivially \textit{stack} with any future improvements to congestion control, circuit multiplexing, or queueing.
As such, we believe \name to be a valuable contribution to improving the latency of Tor connections.

\paragraph{Compatibility with security-focused path selection}
\name is also fully compatible with any modifications to Tor's path selection algorithm.
 Prior work has shown that existing proposals, overviewed briefly in \Cref{sec:relatedwork}, suffer from poor load balancing and non-uniform client behavior, hurting performance and client anonymity~\cite{rochet2020claps,mitseva2020security,mator,anoa-mators-thesis,anoa,geddes2013how,wacek2013empirical,Wan2019guard}. 
However, this does not preclude some \emph{future} path selection proposal from improving upon Tor's current algorithm.
In this case, \name is again agnostic to the choice of the path selection algorithm and would require no modification to continue improving latency on top of the new algorithm.

\paragraph{Generality} While we apply multi-hop overlay routing to Tor specifically, we note that it is a general purpose technique.
Evaluating its effectiveness for other relatively small scale, distributed communication networks is an interesting direction for future work.
However, as shown by prior work~\cite{schlinker2019peering} and confirmed here for Tor, accurate evaluation of multi-hop overlay routing cannot be done with general purpose latency data and requires measurements from the specific network involved.

\section{Conclusion}
\label{sec:conclusion}
We presented \name, an incrementally-deployable protocol for improving the latency of Tor's connections. 
We evaluated the performance and security of \name, demonstrating that it provides substantial improvements to tail latencies on Tor circuits, with minimal impact to security.
As part of our evaluation we collected a dataset of pairwise latencies between the thousand most popular Tor relays. 
This dataset allowed us to determine the reduction in latency \name provides to Tor circuits \textit{directly} without relying on simulation or approximated data.
Finally, while we proposed and evaluated \name specifically for Tor, the protocol is general and has foreseeable applications to other distributed communication networks.

\section{Acknowledgements}
\label{sec:ack}
We would like to thank Arthur Berger for invaluable suggestions and discussion surrounding internet routing, Frank Cangialosi for conversations about Ting and tips on measuring the Tor network, and Anish Athalye for his ideas around evaluating page load times and feedback on our security analysis.
We are also grateful to the Tor community for their support of this project and, in particular, Roger Dingledine, Georg Koppen, and Tariq Elahi for extensive advice related to running our experiments. 

\bibliographystyle{plainnat}
\bibliography{main}

\begin{thebibliography}{90}
\providecommand{\natexlab}[1]{#1}
\providecommand{\url}[1]{\texttt{#1}}
\expandafter\ifx\csname urlstyle\endcsname\relax
  \providecommand{\doi}[1]{doi: #1}\else
  \providecommand{\doi}{doi: \begingroup \urlstyle{rm}\Url}\fi

\bibitem[sou()]{source-code-for-shortor}
Source code for {ShorTor} measurement infrastructure and {MaTor} security
  analysis.
\newblock \url{https://github.com/sachaservan/ShorTor}.

\bibitem[rip(2021)]{ripeatlas}
{RIPE Atlas}.
\newblock \url{https://atlas.ripe.net/}, 2021.
\newblock Accessed December 2021.

\bibitem[Akamai(2021{\natexlab{a}})]{akamai-cdn}
Akamai.
\newblock Content delivery networks — what is a {CDN}?
\newblock \url{https://www.akamai.com/our-thinking/cdn/what-is-a-cdn},
  2021{\natexlab{a}}.
\newblock Accessed December 2021.

\bibitem[Akamai(2021{\natexlab{b}})]{sureroute}
Akamai.
\newblock {SureRoute}.
\newblock \url{https://developer.akamai.com/article/sureroute},
  2021{\natexlab{b}}.
\newblock Accessed December 2021.

\bibitem[Akhoondi et~al.(2012)Akhoondi, Yu, and Madhyastha]{akhoondi2012lastor}
Masoud Akhoondi, Curtis Yu, and Harsha~V Madhyastha.
\newblock {LASTor}: A low-latency {AS}-aware {Tor} client.
\newblock In \emph{2012 IEEE Symposium on Security and Privacy}, pages
  476--490. IEEE, 2012.

\bibitem[AlSabah et~al.(2011)AlSabah, Bauer, Goldberg, Grunwald, McCoy, Savage,
  and Voelker]{alsabah2011defenestrator}
Mashael AlSabah, Kevin Bauer, Ian Goldberg, Dirk Grunwald, Damon McCoy, Stefan
  Savage, and Geoffrey~M. Voelker.
\newblock {DefenestraTor}: Throwing out windows in {Tor}.
\newblock In Simone Fischer-H{\"u}bner and Nicholas Hopper, editors,
  \emph{Privacy Enhancing Technologies}, pages 134--154, Berlin, Heidelberg,
  2011. Springer Berlin Heidelberg.
\newblock ISBN 978-3-642-22263-4.

\bibitem[AlSabah et~al.(2013)AlSabah, Bauer, Elahi, and
  Goldberg]{alsabah2013path}
Mashael AlSabah, Kevin Bauer, Tariq Elahi, and Ian Goldberg.
\newblock The path less travelled: Overcoming {Tor}’s bottlenecks with
  traffic splitting.
\newblock In \emph{International Symposium on Privacy Enhancing Technologies
  Symposium}, pages 143--163. Springer, 2013.

\bibitem[Angel and Setty(2016)]{angel2016unobservable}
Sebastian Angel and Srinath Setty.
\newblock Unobservable communication over fully untrusted infrastructure.
\newblock In \emph{12th {USENIX} Symposium on Operating Systems Design and
  Implementation ({OSDI} 16)}, pages 551--569, 2016.

\bibitem[{Annessi} and {Schmiedecker}(2016)]{annessi2016navigator}
R.~{Annessi} and M.~{Schmiedecker}.
\newblock {NavigaTor}: Finding faster paths to anonymity.
\newblock In \emph{2016 IEEE European Symposium on Security and Privacy
  (EuroS\&P)}, pages 214--226, 2016.

\bibitem[Arapakis et~al.(2014)Arapakis, Bai, and
  Cambazoglu]{arapakis2014impact}
Ioannis Arapakis, Xiao Bai, and B.~Barla Cambazoglu.
\newblock Impact of response latency on user behavior in web search.
\newblock In \emph{Proceedings of the 37th International ACM SIGIR conference
  on Research \& Development in Information Retrieval}, pages 103--112, 2014.

\bibitem[Attig et~al.(2017)Attig, Rauh, Franke, and Krems]{attig2017system}
Christiane Attig, Nadine Rauh, Thomas Franke, and Josef~F. Krems.
\newblock System latency guidelines then and now - is zero latency really
  considered necessary?
\newblock In Don Harris, editor, \emph{Engineering Psychology and Cognitive
  Ergonomics: Cognition and Design - 14th International Conference, {EPCE}
  2017, Held as Part of {HCI} International 2017, Vancouver, BC, Canada, July
  9-14, 2017, Proceedings, Part {II}}, volume 10276 of \emph{Lecture Notes in
  Computer Science}, pages 3--14. Springer, 2017.
\newblock \doi{10.1007/978-3-319-58475-1\_1}.
\newblock URL \url{https://doi.org/10.1007/978-3-319-58475-1\_1}.

\bibitem[Backes et~al.(2013)Backes, Kate, Manoharan, Meiser, and
  Mohammadi]{anoa}
Michael Backes, Aniket Kate, Praveen Manoharan, Sebastian Meiser, and Esfandiar
  Mohammadi.
\newblock {AnoA}: A framework for analyzing anonymous communication protocols.
\newblock In \emph{2013 IEEE 26th Computer Security Foundations Symposium},
  pages 163--178. IEEE, 2013.

\bibitem[Backes et~al.(2014)Backes, Kate, Meiser, and
  Mohammadi]{backes2014nothing}
Michael Backes, Aniket Kate, Sebastian Meiser, and Esfandiar Mohammadi.
\newblock ({Nothing} else) {MATor}(s) monitoring the anonymity of {Tor}'s path
  selection.
\newblock In \emph{Proceedings of the 2014 ACM SIGSAC Conference on Computer
  and Communications Security}, pages 513--524, 2014.

\bibitem[Backes et~al.(2016)Backes, Meiser, and Slowik]{mator}
Michael Backes, Sebastian Meiser, and Marcin Slowik.
\newblock Your choice {MATor}(s): Large-scale quantitative anonymity assessment
  of {Tor} path selection algorithms against structural attacks.
\newblock \emph{Proceedings on Privacy Enhancing Technologies}, 2016\penalty0
  (2):\penalty0 40--60, 2016.

\bibitem[Barton and Wright(2016)]{barton2016denasa}
Armon Barton and Matthew Wright.
\newblock {DeNASA}: Destination-naive {AS}-awareness in anonymous
  communications.
\newblock \emph{Proceedings on Privacy Enhancing Technologies}, 2016\penalty0
  (4):\penalty0 356--372, 2016.

\bibitem[Barton et~al.(2018)Barton, Wright, Ming, and Imani]{barton2018towards}
Armon Barton, Matthew Wright, Jiang Ming, and Mohsen Imani.
\newblock Towards predicting efficient and anonymous {Tor} circuits.
\newblock In \emph{27th {USENIX} Security Symposium ({USENIX} Security 18)},
  pages 429--444, 2018.

\bibitem[Basyoni et~al.(2019)Basyoni, Erbad, Alsabah, Fetais, and
  Guizani]{baysoni2019quick}
Lamiaa Basyoni, Aiman Erbad, Mashael Alsabah, Noora Fetais, and Mohsen Guizani.
\newblock Empirical performance evaluation of {QUIC} protocol for {Tor}
  anonymity network.
\newblock In \emph{2019 15th International Wireless Communications Mobile
  Computing Conference (IWCMC)}, pages 635--642, 2019.
\newblock \doi{10.1109/IWCMC.2019.8766609}.

\bibitem[Basyoni et~al.(2021)Basyoni, Erbad, Alsabah, Fetais, Mohamed, and
  Guizani]{baysoni2021quictor}
Lamiaa Basyoni, Aiman Erbad, Mashael Alsabah, Noora Fetais, Amr Mohamed, and
  Mohsen Guizani.
\newblock {QuicTor}: Enhancing {Tor} for real-time communication using {QUIC}
  transport protocol.
\newblock \emph{IEEE Access}, 9:\penalty0 28769--28784, 2021.
\newblock \doi{10.1109/ACCESS.2021.3059672}.

\bibitem[Bauer(2015)]{mator-tool}
Markus Bauer.
\newblock \emph{The {MATor} Addon: Accessible Quantification of {Tor} Anonymity
  for the {Tor} Browser}.
\newblock PhD thesis, Saarland University, 2015.

\bibitem[Bhat et~al.(2019)Bhat, Lu, Kwon, and Devadas]{Bhat_2019}
Sanjit Bhat, David Lu, Albert Kwon, and Srinivas Devadas.
\newblock {Var-CNN}: A data-efficient website fingerprinting attack based on
  deep learning.
\newblock \emph{Proceedings on Privacy Enhancing Technologies}, 2019\penalty0
  (4):\penalty0 292–310, Oct 2019.
\newblock ISSN 2299-0984.
\newblock \doi{10.2478/popets-2019-0070}.
\newblock URL \url{http://dx.doi.org/10.2478/popets-2019-0070}.

\bibitem[Blog(2013)]{tor-bomber-harvard}
The~Privacy Blog.
\newblock Why {Tor} failed to hide the bomb hoaxer at {Harvard}.
\newblock
  \url{https://theprivacyblog.com/blog/anonymity/why-tor-failed-to-hide-the-bomb-hoaxer-at-harvard},
  2013.
\newblock Accessed December 2021.

\bibitem[Bornstein et~al.(2013)Bornstein, Canfield, Miller, Rao, and
  Sundaram]{bornstein2013optimal}
Claudson~F Bornstein, Timothy~K Canfield, Gary~L Miller, Satish~B Rao, and Ravi
  Sundaram.
\newblock Optimal route selection in a content delivery network, July~2 2013.
\newblock US Patent 8,477,630.

\bibitem[Cangialosi et~al.(2015)Cangialosi, Levin, and
  Spring]{cangialosi2015ting}
Frank Cangialosi, Dave Levin, and Neil Spring.
\newblock Ting: Measuring and exploiting latencies between all {Tor} nodes.
\newblock In \emph{Proceedings of the 2015 Internet Measurement Conference},
  pages 289--302, 2015.

\bibitem[Crescenzi et~al.(2016)Crescenzi, Kelly, and
  Azzopardi]{crescenzi2016impacts}
Anita Crescenzi, Diane Kelly, and Leif Azzopardi.
\newblock Impacts of time constraints and system delays on user experience.
\newblock In \emph{Proceedings of the 2016 ACM on Conference on Human
  Information Interaction and Retrieval}, pages 141--150, 2016.

\bibitem[Dingledine and Mathewson(2006)]{dingledine2006anonymity}
Roger Dingledine and Nick Mathewson.
\newblock Anonymity loves company: Usability and the network effect.
\newblock In \emph{WEIS}, 2006.

\bibitem[Dingledine and Mathewson(2008)]{tor-spec}
Roger Dingledine and Nick Mathewson.
\newblock {Tor} protocol specification.
\newblock \url{https://gitweb.torproject.org/torspec.git/tree/tor-spec.txt},
  2008.

\bibitem[Dingledine et~al.(2004)Dingledine, Mathewson, and Syverson]{tor-paper}
Roger Dingledine, Nick Mathewson, and Paul Syverson.
\newblock {Tor}: The second-generation onion router.
\newblock In \emph{13th {USENIX} Security Symposium ({USENIX} Security 04)},
  San Diego, CA, August 2004. {USENIX} Association.
\newblock URL
  \url{https://www.usenix.org/conference/13th-usenix-security-symposium/tor-second-generation-onion-router}.

\bibitem[Duffield et~al.(2007)Duffield, Gopalan, Hines, Shaikh, and Van
  Der~Merwe]{duffield2007measurement}
Nick Duffield, Kartik Gopalan, Michael~R Hines, Aman Shaikh, and Jacobus~E Van
  Der~Merwe.
\newblock Measurement informed route selection.
\newblock In \emph{International Conference on Passive and Active Network
  Measurement}, pages 250--254. Springer, 2007.

\bibitem[Dwork(2008)]{dwork2008differential}
Cynthia Dwork.
\newblock Differential privacy: A survey of results.
\newblock In \emph{International Conference on Theory and Applications of
  Models of Computation}, pages 1--19. Springer, 2008.

\bibitem[Edman and Syverson(2009)]{edman2009awareness}
Matthew Edman and Paul Syverson.
\newblock {AS}-awareness in {Tor} path selection.
\newblock In \emph{Proceedings of the 16th ACM conference on Computer and
  Communications security}, pages 380--389, 2009.

\bibitem[Eskandarian et~al.(2021)Eskandarian, Corrigan-Gibbs, Zaharia, and
  Boneh]{eskandarian2019express}
Saba Eskandarian, Henry Corrigan-Gibbs, Matei Zaharia, and Dan Boneh.
\newblock Express: Lowering the cost of metadata-hiding communication with
  cryptographic privacy.
\newblock In \emph{30th {USENIX} Security Symposium ({USENIX} Security 21)},
  pages 1775--1792. {USENIX} Association, August 2021.
\newblock ISBN 978-1-939133-24-3.
\newblock URL
  \url{https://www.usenix.org/conference/usenixsecurity21/presentation/eskandarian}.

\bibitem[Fiedler et~al.(2020)Fiedler, Döpmann, Tschorsch, and
  Lucia]{fiedler2020predictor}
Felix Fiedler, Christoph Döpmann, Florian Tschorsch, and Sergio Lucia.
\newblock Predictor: Predictive congestion control for the tor network.
\newblock In \emph{2020 IEEE Conference on Control Technology and Applications
  (CCTA)}, pages 863--870, 2020.
\newblock \doi{10.1109/CCTA41146.2020.9206384}.

\bibitem[Gallagher(2020)]{gallagher2020measurement}
Kevin Gallagher.
\newblock \emph{Measurement and Improvement of the Tor User Experience}.
\newblock PhD thesis, New York University Tandon School of Engineering, 2020.
\newblock URL \url{https://www.proquest.com/docview/2370441192}.

\bibitem[Gallagher et~al.(2017)Gallagher, Patil, and Memon]{gallagher2017newme}
Kevin Gallagher, Sameer Patil, and Nasir~D. Memon.
\newblock New me: Understanding expert and non-expert perceptions and usage of
  the tor anonymity network.
\newblock In \emph{Thirteenth Symposium on Usable Privacy and Security, {SOUPS}
  2017, Santa Clara, CA, USA, July 12-14, 2017}, pages 385--398. {USENIX}
  Association, 2017.
\newblock URL
  \url{https://www.usenix.org/conference/soups2017/technical-sessions/presentation/gallagher}.

\bibitem[Gallagher et~al.(2018)Gallagher, Patil, Dolan{-}Gavitt, McCoy, and
  Memon]{gallagher2018peeling}
Kevin Gallagher, Sameer Patil, Brendan Dolan{-}Gavitt, Damon McCoy, and
  Nasir~D. Memon.
\newblock Peeling the onion's user experience layer: Examining naturalistic use
  of the tor browser.
\newblock In David Lie, Mohammad Mannan, Michael Backes, and XiaoFeng Wang,
  editors, \emph{Proceedings of the 2018 {ACM} {SIGSAC} Conference on Computer
  and Communications Security, {CCS} 2018, Toronto, ON, Canada, October 15-19,
  2018}, pages 1290--1305. {ACM}, 2018.
\newblock \doi{10.1145/3243734.3243803}.
\newblock URL \url{https://doi.org/10.1145/3243734.3243803}.

\bibitem[Geddes et~al.(2013)Geddes, Jansen, and Hopper]{geddes2013how}
John Geddes, Rob Jansen, and Nicholas Hopper.
\newblock How low can you go: Balancing performance with anonymity in {Tor}.
\newblock In Emiliano De~Cristofaro and Matthew Wright, editors, \emph{Privacy
  Enhancing Technologies}, pages 164--184, Berlin, Heidelberg, 2013. Springer
  Berlin Heidelberg.

\bibitem[Gharaibeh et~al.(2017)Gharaibeh, Shah, Huffaker, Zhang, Ensafi, and
  Papadopoulos]{geo-ip-bad-2}
Manaf Gharaibeh, Anant Shah, Bradley Huffaker, Han Zhang, Roya Ensafi, and
  Christos Papadopoulos.
\newblock A look at router geolocation in public and commercial databases.
\newblock In \emph{Proceedings of the 2017 Internet Measurement Conference},
  pages 463--469, 2017.

\bibitem[Gilad and Herzberg(2012)]{gilad2012spying}
Yossi Gilad and Amir Herzberg.
\newblock Spying in the dark: {TCP} and {Tor} traffic analysis.
\newblock In \emph{International symposium on privacy enhancing technologies
  symposium}, pages 100--119. Springer, 2012.

\bibitem[Greubel et~al.(2020)Greubel, Pohl, and Kounev]{greubel2020quantifying}
Andre Greubel, Steffen Pohl, and Samuel Kounev.
\newblock Quantifying measurement quality and load distribution in tor.
\newblock In \emph{Annual Computer Security Applications Conference}, ACSAC
  '20, page 129–140, New York, NY, USA, 2020. Association for Computing
  Machinery.
\newblock ISBN 9781450388580.
\newblock \doi{10.1145/3427228.3427238}.
\newblock URL \url{https://doi.org/10.1145/3427228.3427238}.

\bibitem[Hanley et~al.(2019)Hanley, Sun, Wagh, and Mittal]{hanley2019dpselect}
Hans Hanley, Yixin Sun, Sameer Wagh, and Prateek Mittal.
\newblock {DPSelect}: A differential privacy based guard relay selection
  algorithm for {Tor}.
\newblock \emph{Proceedings on Privacy Enhancing Technologies}, 2019\penalty0
  (2):\penalty0 166--186, 2019.
\newblock \doi{doi:10.2478/popets-2019-0025}.
\newblock URL \url{https://doi.org/10.2478/popets-2019-0025}.

\bibitem[Hashimoto(2014)]{terraform}
Mitchell Hashimoto.
\newblock Terraform.
\newblock \url{https://www.terraform.io}, 2014.
\newblock Accessed December 2021.

\bibitem[{Imani} et~al.(2019){Imani}, {Amirabadi}, and
  {Wright}]{imani2019modified}
M.~{Imani}, M.~{Amirabadi}, and M.~{Wright}.
\newblock Modified relay selection and circuit selection for faster {Tor}.
\newblock \emph{IET Communications}, 13\penalty0 (17):\penalty0 2723--2734,
  2019.

\bibitem[Jansen and Traudt(2017)]{jansen2017tor}
Rob Jansen and Matthew Traudt.
\newblock Tor's been {KIST}: A case study of transitioning {Tor} research to
  practice.
\newblock \emph{arXiv preprint arXiv:1709.01044}, 2017.

\bibitem[Jansen et~al.(2014)Jansen, Geddes, Wacek, Sherr, and
  Syverson]{jansen2014never}
Rob Jansen, John Geddes, Chris Wacek, Micah Sherr, and Paul Syverson.
\newblock Never been {KIST}: Tor’s congestion management blossoms with
  kernel-informed socket transport.
\newblock In \emph{23rd {USENIX} Security Symposium ({USENIX} Security 14)},
  pages 127--142, 2014.

\bibitem[Jansen et~al.(2018{\natexlab{a}})Jansen, Juarez, Galvez, Elahi, and
  Diaz]{jansen2018inside}
Rob Jansen, Marc Juarez, Rafa Galvez, Tariq Elahi, and Claudia Diaz.
\newblock Inside job: Applying traffic analysis to measure {Tor} from within.
\newblock In \emph{NDSS}, 2018{\natexlab{a}}.

\bibitem[Jansen et~al.(2018{\natexlab{b}})Jansen, Traudt, Geddes, Wacek, Sherr,
  and Syverson]{jansen2018kist}
Rob Jansen, Matthew Traudt, John Geddes, Chris Wacek, Micah Sherr, and Paul
  Syverson.
\newblock {KIST}: Kernel-informed socket transport for {ToR}.
\newblock \emph{ACM Transactions on Privacy and Security (TOPS)}, 22\penalty0
  (1):\penalty0 1--37, 2018{\natexlab{b}}.

\bibitem[Johnson et~al.(2013)Johnson, Wacek, Jansen, Sherr, and
  Syverson]{johnson2013users}
Aaron Johnson, Chris Wacek, Rob Jansen, Micah Sherr, and Paul Syverson.
\newblock Users get routed: Traffic correlation on {Tor} by realistic
  adversaries.
\newblock In \emph{Proceedings of the 2013 ACM SIGSAC Conference on Computer \&
  Communications Security}, pages 337--348, 2013.

\bibitem[Johnson et~al.(2017)Johnson, Jansen, Jaggard, Feigenbaum, and
  Syverson]{johnson2017avoiding}
Aaron Johnson, Rob Jansen, Aaron~D. Jaggard, Joan Feigenbaum, and Paul
  Syverson.
\newblock Avoiding the man on the wire: Improving {Tor}'s security with
  trust-aware path selection.
\newblock \emph{Network and Distributed System Security Symposium (NDSS)},
  2017.

\bibitem[{Karaoglu} et~al.(2012){Karaoglu}, {Akgun}, {Gunes}, and
  {Yuksel}]{karaoglu2012multi}
H.~T. {Karaoglu}, M.~B. {Akgun}, M.~H. {Gunes}, and M.~{Yuksel}.
\newblock Multi path considerations for anonymized routing: Challenges and
  opportunities.
\newblock In \emph{2012 5th International Conference on New Technologies,
  Mobility and Security (NTMS)}, pages 1--5, 2012.

\bibitem[Kohls and P{\"o}pper(2018)]{digestor}
Katharina Kohls and Christina P{\"o}pper.
\newblock {DigesTor}: Comparing passive traffic analysis attacks on {Tor}.
\newblock In Javier Lopez, Jianying Zhou, and Miguel Soriano, editors,
  \emph{Computer Security}, pages 512--530, Cham, 2018. Springer International
  Publishing.
\newblock ISBN 978-3-319-99073-6.

\bibitem[Kohls et~al.(2019)Kohls, Jansen, Rupprecht, Holz, and
  P{\"o}pper]{geo-ip-challenges-tor}
Katharina Kohls, Kai Jansen, David Rupprecht, Thorsten Holz, and Christina
  P{\"o}pper.
\newblock On the challenges of geographical avoidance for tor.
\newblock In \emph{NDSS}, 2019.

\bibitem[K{\"{o}}psell(2006)]{kopsell2006low}
Stefan K{\"{o}}psell.
\newblock Low latency anonymous communication - how long are users willing to
  wait?
\newblock In G{\"{u}}nter M{\"{u}}ller, editor, \emph{Emerging Trends in
  Information and Communication Security, International Conference, {ETRICS}
  2006, Freiburg, Germany, June 6-9, 2006, Proceedings}, volume 3995 of
  \emph{Lecture Notes in Computer Science}, pages 221--237. Springer, 2006.
\newblock \doi{10.1007/11766155\_16}.
\newblock URL \url{https://doi.org/10.1007/11766155\_16}.

\bibitem[Kwon et~al.(2017)Kwon, Corrigan-Gibbs, Devadas, and
  Ford]{kwon2017atom}
Albert Kwon, Henry Corrigan-Gibbs, Srinivas Devadas, and Bryan Ford.
\newblock Atom: Horizontally scaling strong anonymity.
\newblock In \emph{Proceedings of the 26th Symposium on Operating Systems
  Principles}, pages 406--422, 2017.

\bibitem[Kwon et~al.(2020)Kwon, Lu, and Devadas]{kwon2020xrd}
Albert Kwon, David Lu, and Srinivas Devadas.
\newblock {XRD}: Scalable messaging system with cryptographic privacy.
\newblock In \emph{17th {USENIX} Symposium on Networked Systems Design and
  Implementation ({NSDI} 20)}, pages 759--776, 2020.

\bibitem[Lalkaka(2017)]{argo}
Rustam Lalkaka.
\newblock Introducing {Argo} — {A} faster, more reliable, more secure
  internet for everyone.
\newblock \url{https://blog.cloudflare.com/argo/}, 2017.
\newblock Accessed December 2021.

\bibitem[Lazar et~al.(2018)Lazar, Gilad, and Zeldovich]{lazar2018karaoke}
David Lazar, Yossi Gilad, and Nickolai Zeldovich.
\newblock Karaoke: Distributed private messaging immune to passive traffic
  analysis.
\newblock In \emph{13th {USENIX} Symposium on Operating Systems Design and
  Implementation ({OSDI} 18)}, pages 711--725, 2018.

\bibitem[Li et~al.(2018)Li, Guo, and Hopper]{li2018correlation}
Shuai Li, Huajun Guo, and Nicholas Hopper.
\newblock Measuring information leakage in website fingerprinting attacks and
  defenses.
\newblock In \emph{Proceedings of the 2018 ACM SIGSAC Conference on Computer
  and Communications Security}, CCS ’18, page 1977–1992, New York, NY, USA,
  2018. Association for Computing Machinery.
\newblock ISBN 9781450356930.
\newblock \doi{10.1145/3243734.3243832}.
\newblock URL \url{https://doi.org/10.1145/3243734.3243832}.

\bibitem[Loesing et~al.(2010)Loesing, Murdoch, and Dingledine]{tor-metrics}
Karsten Loesing, Steven~J. Murdoch, and Roger Dingledine.
\newblock A case study on measuring statistical data in the {Tor} anonymity
  network.
\newblock In \emph{Proceedings of the Workshop on Ethics in Computer Security
  Research (WECSR 2010)}, LNCS. Springer, January 2010.

\bibitem[MaxMind(2021)]{geoip}
MaxMind.
\newblock {GeoLite2 Free Geolocation Data}.
\newblock \url{https://dev.maxmind.com/geoip/geolite2-free-geolocation-data},
  2021.
\newblock Accessed December 2021.

\bibitem[Meiser(2016)]{anoa-mators-thesis}
Sebastian Wilhelm~Ludwig Meiser.
\newblock \emph{Quantitative anonymity guarantees for {Tor}}.
\newblock PhD thesis, Saarland University, 2016.

\bibitem[Mitseva et~al.(2020)Mitseva, Aleksandrova, Engel, and
  Panchenko]{mitseva2020security}
Asya Mitseva, Marharyta Aleksandrova, Thomas Engel, and Andriy Panchenko.
\newblock Security and performance implications of {BGP} rerouting-resistant
  guard selection algorithms for {Tor}.
\newblock In Marko H{\"o}lbl, Kai Rannenberg, and Tatjana Welzer, editors,
  \emph{ICT Systems Security and Privacy Protection}, pages 219--233, Cham,
  2020. Springer International Publishing.
\newblock ISBN 978-3-030-58201-2.

\bibitem[Nasr et~al.(2018)Nasr, Bahramali, and Houmansadr]{nasr2018deepcorr}
Milad Nasr, Alireza Bahramali, and Amir Houmansadr.
\newblock {DeepCorr}: Strong flow correlation attacks on {Tor} using deep
  learning.
\newblock In \emph{Proceedings of the 2018 ACM SIGSAC Conference on Computer
  and Communications Security}, pages 1962--1976. ACM, 2018.

\bibitem[Netravali et~al.(2018)Netravali, Nathan, Mickens, and
  Balakrishnan]{netravali2018vesper}
Ravi Netravali, Vikram Nathan, James Mickens, and Hari Balakrishnan.
\newblock Vesper: Measuring time-to-interactivity for web pages.
\newblock In Sujata Banerjee and Srinivasan Seshan, editors, \emph{15th
  {USENIX} Symposium on Networked Systems Design and Implementation, {NSDI}
  2018, Renton, WA, USA, April 9-11, 2018}, pages 217--231. {USENIX}
  Association, 2018.
\newblock URL
  \url{https://www.usenix.org/conference/nsdi18/presentation/netravali-vesper}.

\bibitem[Nithyanand et~al.(2016)Nithyanand, Starov, Gill, Zair, and
  Schapira]{nithyan2015measuring}
Rishab Nithyanand, Oleksii Starov, Phillipa Gill, Adva Zair, and Michael
  Schapira.
\newblock Measuring and mitigating as-level adversaries against tor.
\newblock In \emph{23rd Annual Network and Distributed System Security
  Symposium, {NDSS} 2016, San Diego, California, USA, February 21-24, 2016}.
  The Internet Society, 2016.
\newblock URL
  \url{http://wp.internetsociety.org/ndss/wp-content/uploads/sites/25/2017/09/measuring-mitigating-as-level-adversaries-against-tor.pdf}.

\bibitem[Piotrowska et~al.(2017)Piotrowska, Hayes, Elahi, Meiser, and
  Danezis]{piotrowska2017loopix}
Ania~M Piotrowska, Jamie Hayes, Tariq Elahi, Sebastian Meiser, and George
  Danezis.
\newblock The {Loopix} anonymity system.
\newblock In \emph{26th {USENIX} Security Symposium ({USENIX} Security 17)},
  pages 1199--1216, 2017.

\bibitem[Poese et~al.(2011)Poese, Uhlig, Kaafar, Donnet, and
  Gueye]{geo-ip-bad-1}
Ingmar Poese, Steve Uhlig, Mohamed~Ali Kaafar, Benoit Donnet, and Bamba Gueye.
\newblock Ip geolocation databases: Unreliable?
\newblock \emph{ACM SIGCOMM Computer Communication Review}, 41\penalty0
  (2):\penalty0 53--56, 2011.

\bibitem[Project(2021)]{tor-metrics-network-size}
The~Tor Project.
\newblock Tor metrics: relays and bridges.
\newblock \url{https://metrics.torproject.org/networksize.html}, 2021.
\newblock Accessed December 2021.

\bibitem[project(2021{\natexlab{a}})]{tor-relay-requirements}
The~{Tor} project.
\newblock Relay requirements.
\newblock \url{https://community.torproject.org/relay/relays-requirements/},
  2021{\natexlab{a}}.
\newblock Accessed December 2021.

\bibitem[project(2021{\natexlab{b}})]{tor-research-safety-board}
The~{Tor} project.
\newblock Research safety board.
\newblock \url{https://research.torproject.org/safetyboard/},
  2021{\natexlab{b}}.
\newblock Accessed December 2021.

\bibitem[Rajiullah(2015)]{rajiullah2015towards}
Mohammad Rajiullah.
\newblock \emph{Towards a low latency internet: understanding and solutions}.
\newblock PhD thesis, Karlstad University Press, 2015.
\newblock URL
  \url{https://www.diva-portal.org/smash/get/diva2:846109/FULLTEXT01.pdf}.

\bibitem[Rajiullah et~al.(2019)Rajiullah, Lutu, Khatouni, Fida, Mellia,
  Brunstr{\"{o}}m, Alay, Alfredsson, and Mancuso]{rajiullah2019web}
Mohammad Rajiullah, Andra Lutu, Ali~Safari Khatouni, Mah{-}Rukh Fida, Marco
  Mellia, Anna Brunstr{\"{o}}m, {\"{O}}zg{\"{u}} Alay, Stefan Alfredsson, and
  Vincenzo Mancuso.
\newblock Web experience in mobile networks: Lessons from two million page
  visits.
\newblock In Ling Liu, Ryen~W. White, Amin Mantrach, Fabrizio Silvestri,
  Julian~J. McAuley, Ricardo Baeza{-}Yates, and Leila Zia, editors, \emph{The
  World Wide Web Conference, {WWW} 2019, San Francisco, CA, USA, May 13-17,
  2019}, pages 1532--1543. {ACM}, 2019.
\newblock \doi{10.1145/3308558.3313606}.
\newblock URL \url{https://doi.org/10.1145/3308558.3313606}.

\bibitem[Rimmer et~al.(2018)Rimmer, Preuveneers, Juarez, Goethem, and
  Joosen]{Rimmer_2018}
Vera Rimmer, Davy Preuveneers, Marc Juarez, Tom~Van Goethem, and Wouter Joosen.
\newblock Automated website fingerprinting through deep learning.
\newblock \emph{Proceedings 2018 Network and Distributed System Security
  Symposium}, 2018.
\newblock \doi{10.14722/ndss.2018.23105}.
\newblock URL \url{http://dx.doi.org/10.14722/ndss.2018.23105}.

\bibitem[Rochet and Pereira(2016)]{rochet2016waterfiling}
Florentin Rochet and Olivier Pereira.
\newblock Waterfiling: Balancing the {Tor} network with maximum diversity.
\newblock \emph{CoRR}, abs/1609.04203, 2016.
\newblock URL \url{http://arxiv.org/abs/1609.04203}.

\bibitem[Rochet et~al.(2020)Rochet, Wails, Johnson, Mittal, and
  Pereira]{rochet2020claps}
Florentin Rochet, Ryan Wails, Aaron Johnson, Prateek Mittal, and Olivier
  Pereira.
\newblock {CLAPS}: Client-location-aware path selection in {Tor}.
\newblock In \emph{Proceedings of the 2020 ACM SIGSAC Conference on Computer
  and Communications Security}, CCS '20, page 17–34, New York, NY, USA, 2020.
  Association for Computing Machinery.
\newblock ISBN 9781450370899.
\newblock \doi{10.1145/3372297.3417279}.
\newblock URL \url{https://doi.org/10.1145/3372297.3417279}.

\bibitem[Schlinker et~al.(2019)Schlinker, Arnold, Cunha, and
  Katz-Bassett]{schlinker2019peering}
Brandon Schlinker, Todd Arnold, {\'I}talo Cunha, and Ethan Katz-Bassett.
\newblock {PEERING}: Virtualizing {BGP} at the edge for research.
\newblock In \emph{Proceedings of the 15th International Conference on Emerging
  Networking Experiments And Technologies}, pages 51--67, 2019.

\bibitem[Schnitzler et~al.(2021)Schnitzler, P{\"o}pper, D{\"u}rmuth, and
  Kohls]{Schnitzler2021WeBuilt}
Theodor Schnitzler, C.~P{\"o}pper, Markus D{\"u}rmuth, and K.~Kohls.
\newblock We built this circuit: Exploring threat vectors in circuit
  establishment in {Tor}.
\newblock In \emph{2021 IEEE European Symposium on Security and Privacy (EuroS
  P)}, 2021.

\bibitem[Seetharaman and Ammar(2009)]{seetharaman2009interdomain}
Srinivasan Seetharaman and Mostafa Ammar.
\newblock Inter-domain policy violations in multi-hop overlay routes: Analysis
  and mitigation.
\newblock \emph{Computer Networks}, 53\penalty0 (1):\penalty0 60--80, 2009.
\newblock ISSN 1389-1286.
\newblock \doi{https://doi.org/10.1016/j.comnet.2008.09.014}.
\newblock URL
  \url{https://www.sciencedirect.com/science/article/pii/S1389128608003009}.

\bibitem[Sefraoui et~al.(2012)Sefraoui, Aissaoui, and
  Eleuldj]{sefraoui2012openstack}
Omar Sefraoui, Mohammed Aissaoui, and Mohsine Eleuldj.
\newblock {OpenStack}: toward an open-source solution for cloud computing.
\newblock \emph{International Journal of Computer Applications}, 55\penalty0
  (3):\penalty0 38--42, 2012.

\bibitem[Sherr et~al.(2009)Sherr, Blaze, and Loo]{sherr2009scalable}
Micah Sherr, Matt Blaze, and Boon~Thau Loo.
\newblock Scalable link-based relay selection for anonymous routing.
\newblock In \emph{International Symposium on Privacy Enhancing Technologies
  Symposium}, pages 73--93. Springer, 2009.

\bibitem[Sirinam et~al.(2018)Sirinam, Imani, Juarez, and
  Wright]{sirinam2018deep}
Payap Sirinam, Mohsen Imani, Marc Juarez, and Matthew Wright.
\newblock Deep fingerprinting: Undermining website fingerprinting defenses with
  deep learning.
\newblock In \emph{Proceedings of the 2018 ACM SIGSAC Conference on Computer
  and Communications Security}, pages 1928--1943. ACM, 2018.

\bibitem[Su et~al.(2009)Su, Choffnes, Kuzmanovic, and
  Bustamante]{su2009drafting}
Ao-Jan Su, David~R. Choffnes, Aleksandar Kuzmanovic, and Fabián~E. Bustamante.
\newblock Drafting behind {Akamai}: Inferring network conditions based on {CDN}
  redirections.
\newblock \emph{IEEE/ACM Transactions on Networking}, 17\penalty0 (6):\penalty0
  1752--1765, 2009.
\newblock \doi{10.1109/TNET.2009.2022157}.

\bibitem[Sun et~al.(2017)Sun, Edmundson, Feamster, Chiang, and
  Mittal]{sun2017counter}
Yixin Sun, Anne Edmundson, Nick Feamster, Mung Chiang, and Prateek Mittal.
\newblock Counter-{RAPTOR}: Safeguarding {Tor} against active routing attacks.
\newblock In \emph{2017 IEEE Symposium on Security and Privacy (SP)}, pages
  977--992. IEEE, 2017.

\bibitem[Tyagi et~al.(2017)Tyagi, Gilad, Leung, Zaharia, and
  Zeldovich]{tyagi2017stadium}
Nirvan Tyagi, Yossi Gilad, Derek Leung, Matei Zaharia, and Nickolai Zeldovich.
\newblock Stadium: A distributed metadata-private messaging system.
\newblock In \emph{Proceedings of the 26th Symposium on Operating Systems
  Principles}, pages 423--440, 2017.

\bibitem[Van Den~Hooff et~al.(2015)Van Den~Hooff, Lazar, Zaharia, and
  Zeldovich]{van2015vuvuzela}
Jelle Van Den~Hooff, David Lazar, Matei Zaharia, and Nickolai Zeldovich.
\newblock Vuvuzela: Scalable private messaging resistant to traffic analysis.
\newblock In \emph{Proceedings of the 25th Symposium on Operating Systems
  Principles}, pages 137--152, 2015.

\bibitem[Wacek et~al.(2013)Wacek, Tan, Bauer, and Sherr]{wacek2013empirical}
Chris Wacek, Henry Tan, Kevin~S Bauer, and Micah Sherr.
\newblock An empirical evaluation of relay selection in {Tor}.
\newblock In \emph{NDSS}, 2013.

\bibitem[Wan et~al.(2019)Wan, Johnson, Wails, Wagh, and Mittal]{Wan2019guard}
Gerry Wan, Aaron Johnson, Ryan Wails, Sameer Wagh, and Prateek Mittal.
\newblock Guard placement attacks on path selection algorithms for {Tor}.
\newblock \emph{Proceedings on Privacy Enhancing Technologies}, 2019\penalty0
  (4):\penalty0 272--291, 2019.
\newblock \doi{doi:10.2478/popets-2019-0069}.
\newblock URL \url{https://doi.org/10.2478/popets-2019-0069}.

\bibitem[Wang et~al.(2012)Wang, Bauer, Forero, and
  Goldberg]{wang2012congestion}
Tao Wang, Kevin Bauer, Clara Forero, and Ian Goldberg.
\newblock Congestion-aware path selection for {Tor}.
\newblock In \emph{International Conference on Financial Cryptography and Data
  Security}, pages 98--113. Springer, 2012.

\bibitem[Wang et~al.(2014)Wang, Cai, Nithyanand, Johnson, and
  Goldberg]{fingerprinting-usenix}
Tao Wang, Xiang Cai, Rishab Nithyanand, Rob Johnson, and Ian Goldberg.
\newblock Effective attacks and provable defenses for website fingerprinting.
\newblock In \emph{23rd {USENIX} Security Symposium ({USENIX} Security 14)},
  pages 143--157, San Diego, CA, 2014. {USENIX} Association.
\newblock ISBN 978-1-931971-15-7.
\newblock URL
  \url{https://www.usenix.org/conference/usenixsecurity14/technical-sessions/presentation/wang_tao}.

\bibitem[Yang and Li(2015{\natexlab{a}})]{yang2015enhancing}
Lei Yang and Fengjun Li.
\newblock Enhancing traffic analysis resistance for {Tor} hidden services with
  multipath routing.
\newblock In Bhavani Thuraisingham, XiaoFeng Wang, and Vinod Yegneswaran,
  editors, \emph{Security and Privacy in Communication Networks}, pages
  367--384, Cham, 2015{\natexlab{a}}. Springer International Publishing.
\newblock ISBN 978-3-319-28865-9.

\bibitem[Yang and Li(2015{\natexlab{b}})]{yang2015mtor}
Lei Yang and Fengjun Li.
\newblock {mTor}: A multipath {Tor} routing beyond bandwidth throttling.
\newblock In \emph{2015 IEEE Conference on Communications and Network Security
  (CNS)}, pages 479--487, 2015{\natexlab{b}}.
\newblock \doi{10.1109/CNS.2015.7346860}.

\end{thebibliography}

\appendices
\crefalias{section}{appendix}  

\newpage
\balance
\section{Extended \textsc{MaTor} Plots}
\label{sec:extended-mator-plots}
In this section we provide additional data from our \textsc{MaTor} analysis described in \Cref{sec:quantifying-advantage}. 
We refer to \Cref{sec:secanalysis} for details of the analysis and results. 
\begin{figure}[H]
\centering
\subfloat{{\includegraphics[width=\linewidth]{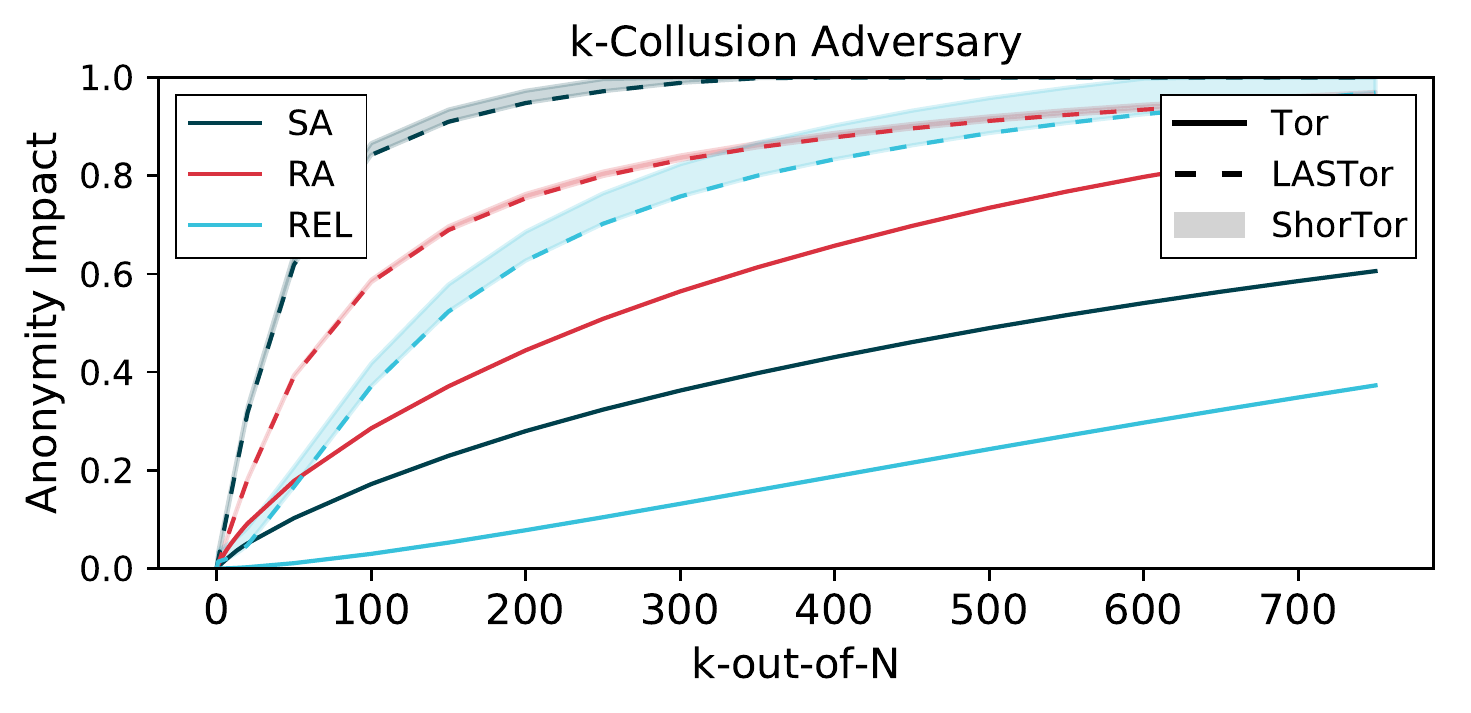}}}\\
\subfloat{{\includegraphics[width=\linewidth]{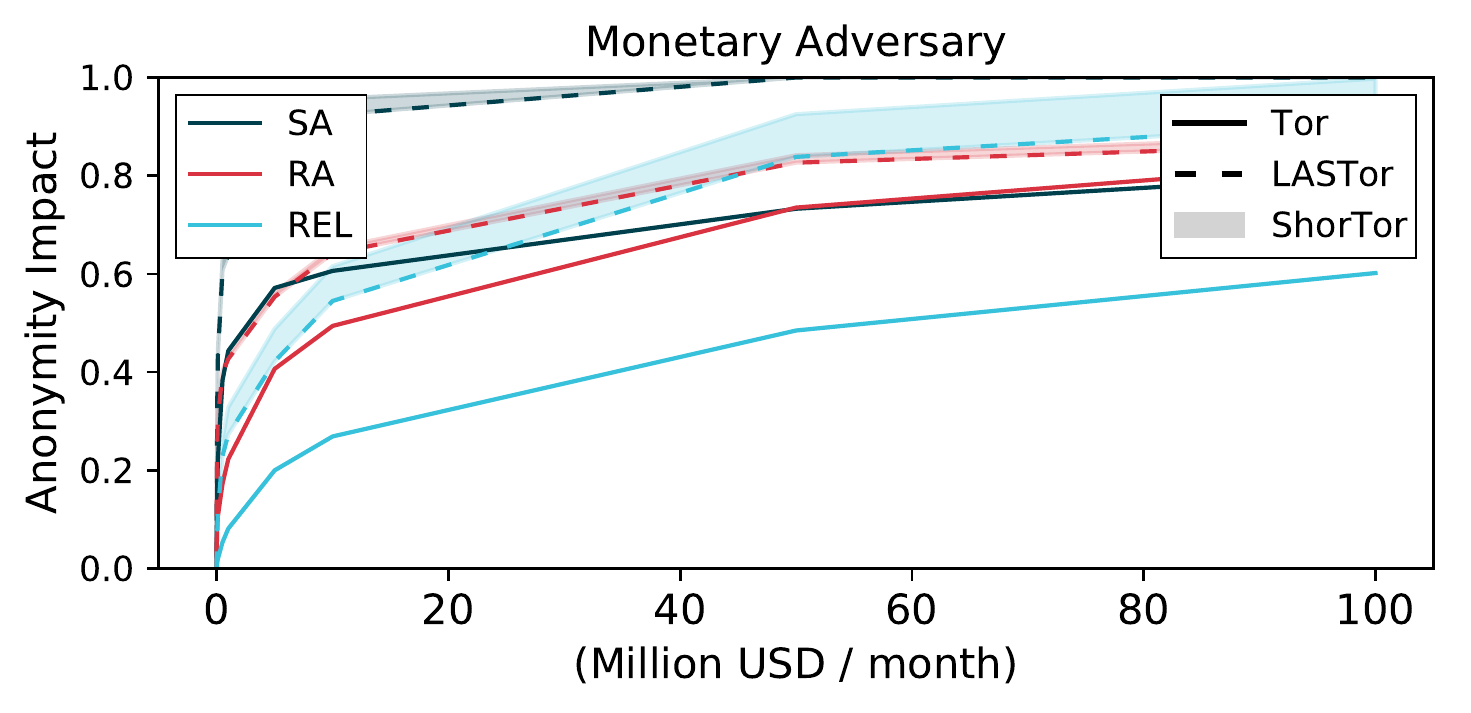}}} \\
\subfloat{{\includegraphics[width=\linewidth]{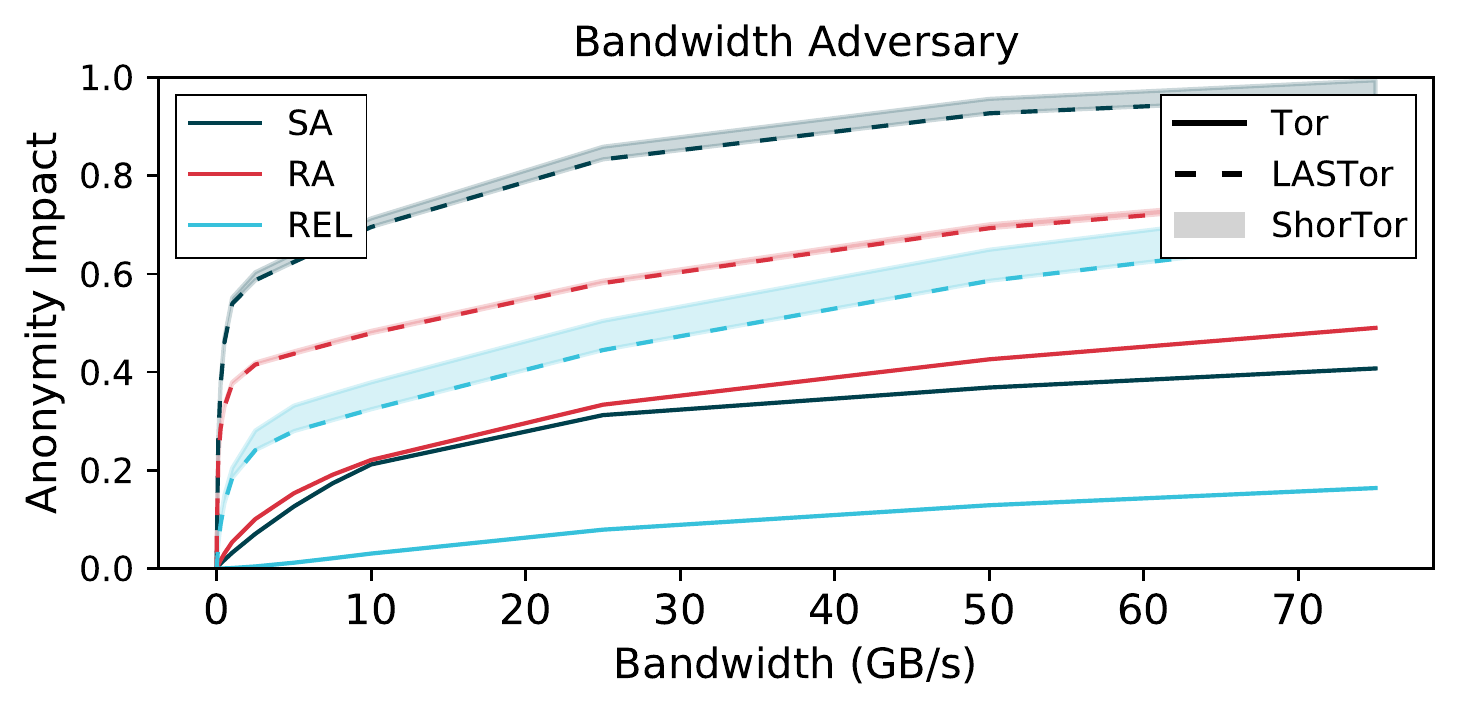}}} \\
\subfloat{{\includegraphics[width=\linewidth]{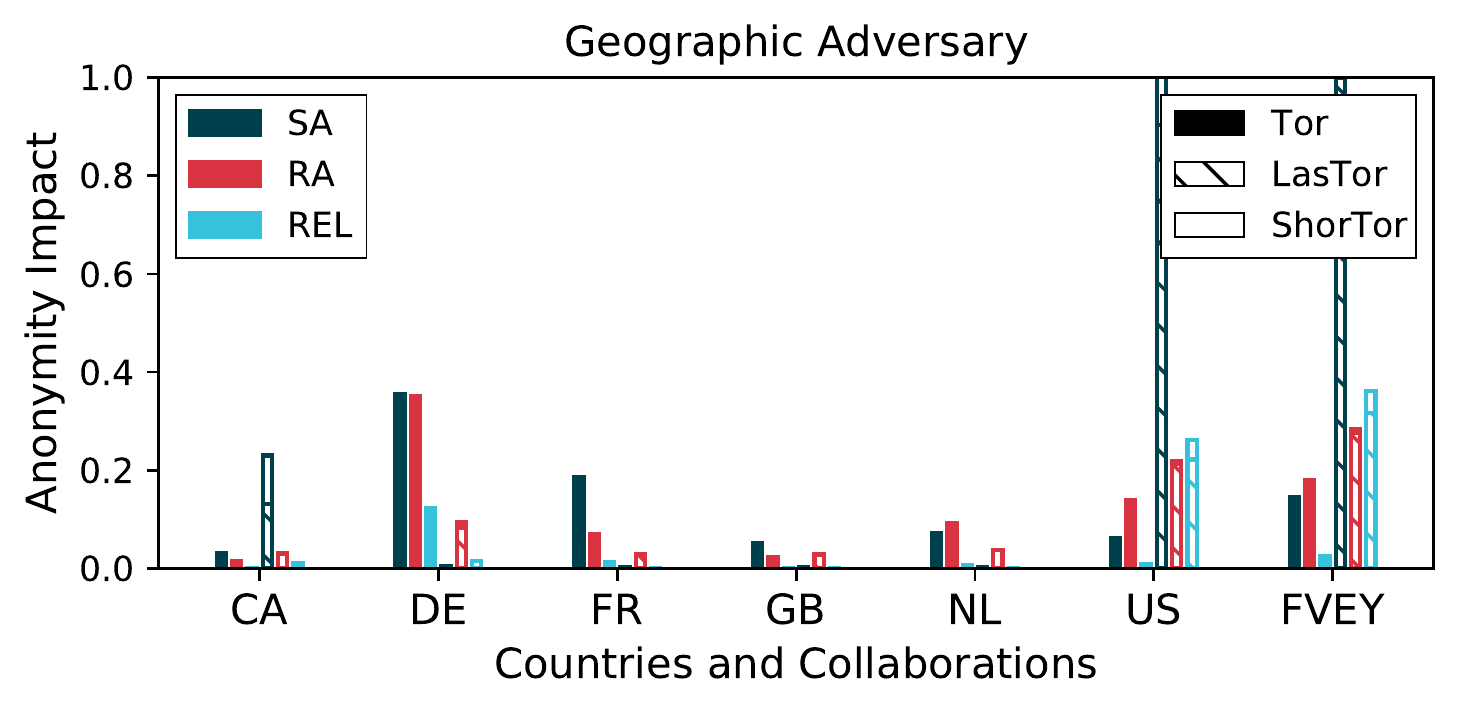}}}
\caption{
{\bf Extended experiments} for the anonymity impact of \name.
Anonymity impact of \name, compared with baseline Tor (client-independent) and LASTor (dependent on client location) path selection.
Each plot shows sender (SA), recipient (RA), and relationship (REL) anonymity (\Cref{def:anon-notions}) for a different adversarial relay corruption strategy. 
Shaded region represents the difference in the \textsc{MaTor}-computed anonymity bounds with and without \name. 
``FVEY'' refers to ``Five Eyes'' intelligence alliance member countries.
Under ANoA, \name affects all anonymity notions for LASTor, though not baseline Tor.
}
\label{fig:mator-extended}
\end{figure}

\section{Extended Integration with Tor}
\label{appendix:tor-modifications}
\Cref{sec:tor-integration} covers the high-level design decisions we made for \name to support integration with Tor.
Here, we provide specific details on the necessary modifications and extensions which are required to integrate \name with Tor. 
\name has three primary components not present in baseline Tor: (1) on-demand data races, (2) periodic sharing of latency information, and (3) a separate data path for via traffic.
We design these components to be minimal extensions to Tor's existing protocol.

\subsection{Data Races} 
In \name, data races represent the majority of required modifications.
To successfully conduct a data race for traffic on a circuit, relays must: (1) recognize data race traffic as separate from steady state traffic, (2) decide when a race should be conducted while observing appropriate backoff parameters, (3) interpret the new latency tables to discover potential vias, and (4) update routing information to include vias. 

We now describe how these four requirements can be implemented into Tor relay logic.

\begin{enumerate}[label=(\arabic*)]
    \item 
        can be achieved using the existing \texttt{CMD} field in Tor's cell header (see \Cref{fig:via-cell}). 
        By introducing a new \texttt{CMD} value to indicate that traffic is part of a data race, both vias and circuit relays can immediately recognize (and potentially drop) race traffic with minimal processing. 
        Importantly, using the \texttt{CMD} field allows \name to conduct data races \emph{without} altering the content of the cell. 
        Because of this, data race cells are simply normal cells from a client's traffic stream and clients will see no interruption or delay while the race is run.
    \item 
    is described in \Cref{sec:shortor-proto} and \Cref{sec:tor-integration}. 
    The only additional detail is that this process will need to tie into the queuing architecture shown in \Cref{fig:tor-queuing} such that, when a data race is to be conducted for a circuit, the next cell out of its queue will be duplicated, have its header modified according to \Cref{fig:via-cell}, and have the copy rerouted to each of the vias in the race prior to reaching the output buffer.
    This is due to the fact that, as the cells are now going to different destinations, they will also be on different TCP/TLS connections and must be sent to the appropriate output buffer.
    \item 
    is a simple matter of reading the latest latency table and selecting the vias with the lowest recorded latencies to participate in the race. 
    
    \item 
    is handled by a new field for the \texttt{ViaID} in Tor's routing table and the new header information in Tor cells containing the \texttt{IDs} of the two adjacent circuit relays (see \Cref{fig:via-cell}).
    This information allows the via relay and both adjacent circuit relays to recognize which circuit a traffic stream is from (and, consequently, where it should be routed) even when the traffic stream arrives on a different TCP/TLS connection than is usual for the circuit (i.e., arrives through a via, not from the previous circuit relay).
\end{enumerate}

\subsection{Latency Tables}
Maintaining and disseminating up to date latency information is a simple process detailed in \Cref{proto:latency-measurements} presented in Section \ref{sec:shortor-proto}.
Unlike data races, it can be run entirely independently from the main Tor protocol and does not involve any circuit traffic.
As such, it requires no actual integration with the Tor protocol, but is simply an entirely separate functionality run by Tor relays.
The table of latency information produced as part of this process is accessed to inform the selection of candidate vias for a data race, but is not otherwise involved and, in particular, is not used during the steady-state of routing traffic through vias.

\subsection{Data Path for Via Traffic}
The data path for via traffic is almost identical to that of regular, non-via, Tor traffic and follows the queuing architecture of baseline Tor as shown in \Cref{fig:tor-queuing}.
The two differences are that traffic is not onion encrypted/decrypted while being forwarded by a via and that via traffic is considered lower priority than circuit traffic. 
Via traffic, like data races, can utilize the \texttt{CMD} field in the cell header to identify itself as soon as the TLS layer has been decrpyted. 
This lets the via relay know that it is not expected to operate on the onion encryption layer (and does not have the keys necessary to do so) and that it should send the cell directly to the appropriate queue instead.
Finally, while Tor already prioritizes browsing traffic over bulk download traffic, \name requires a new priority level for via traffic that is below that of circuit traffic.
This priority level is applied when cells are dequeued to be scheduled to the output buffers.
Deprioritizing via traffic is necessary to ensure that it never outcompetes circuit traffic sharing the same relay (see paragraph in \Cref{sec:tor-integration} on load balancing and fairness).

\end{document}